\newtheorem{theorem}{Theorem}
\newtheorem{corollary}[theorem]{Corollary}
\newtheorem{assumption}{Assumption}
\newtheorem{model}{Model}[]
\renewcommand{\section}{
	\@startsection
	{section}% name
	{1}% level
	{0pt}% indent
	{1.1\baselineskip}% beforeskip
	{0.2\baselineskip}% afterskip
	{\sc \centering}% style
}
\renewcommand{\subsection}{
	\@startsection
	{subsection}% name
	{1}% level
	{0pt}% indent
	{1.1\baselineskip}% beforeskip
	{0.2\baselineskip}% afterskip
	{\sc \centering}% style
}
\renewcommand{\subsubsection}{
	\@startsection
	{subsubsection}% name
	{1}% level
	{0pt}% indent
	{1.1\baselineskip}% beforeskip
	{0.2\baselineskip}% afterskip
	{\sc \centering}% style
}
\begin{document}
	
\title{\large\sc Dynamic loan portfolio management in a three time step model}
\normalsize
\author{
\sc{Deb Narayan Barik} \thanks{Department of Mathematics, Indian Institute of Technology Guwahati, Guwahati-781039, India, e-mail: d.narayan@iitg.ac.in}
\and 
\sc{Siddhartha P. Chakrabarty} \thanks{Department of Mathematics, Indian Institute of Technology Guwahati, Guwahati-781039, India, e-mail: pratim@iitg.ac.in, Phone: +91-361-2582606}}

\date{}
\maketitle
\begin{abstract}

This paper studies the bank dynamic decision problem in the intermediate time step for a discrete-time setup. We have considered a three-time-step model. Initially, the banks raise money through debt and equity and invest in different types of loans. It liquidates its assets and raises new funds at the intermediate-time step to meet the short-term debt holder’s claim. Further, it has to meet specific capital requirements given by the regulators. In this work, we have theoretically studied the effect of raising new equity and debt. We show that in some cases, raising equity and debt may increase the return on equity, and in some cases, it may decrease the return on equity. We have discussed several cases and given a bound on the capital that can be raised. We have added an equity holder's constraint, which ensures the return on equity and desists the bank from defaulting at the final time point.

{\it Keywords: Dynamic bank behaviour; Capital structure; Liquidation; Leverage Ratio}

{\it JEL Classification: C51, C65}

\end{abstract}

\section{Introduction}
\label{Sec_Three_Introduction}
 
Financial regulators have introduced the capital requirement to discourage banks or other financial institutions from taking excess leverage and risk. Basel I and Basel II \cite{Basel1, Basel2} have introduced risk-based capital measures. However several incompleteness in these definitions has broadened the path to impose more powerful regulation on the banks and financial institutions. Basel III \cite{Basel3Leverage}, introduced Leverage Ratio (LR), a non-risk-based risk metric, that serves as credible supplementary metric to the risk-based requirement. The definition of LR is given by
\[\text{LR}:=\frac{\text{Tier 1 capital}}{\text{Total exposure measure}} ,\] 
with this ratio being a counter-cyclical capital measure. BIS has identified that pro-cyclical measures are not sufficient to serve the purpose of stabilizing banks \cite{Basel3Leverage}, and therefore they have introduced a counter cyclical measure. A countercyclical measure is more effective at preventing financial crisis, by reducing systematic risk and credit bubbles. In \cite{Blum2008}, the author has shown that the risk-based capital requirement along with the Leverage Ratio serves better as a capital requirement criterion. D'Hulster \cite{D2009} has discussed about bank leverage and various aspects of the Leverage Ratio. Philipp M Hildebrand \cite{Hildebrand2008} has claimed that implementing risk-based capital requirements with a Leverage Ratio lowers the leverage of the bank. This eventually decreases the chance of default of the bank. In this work we have considered the loan portfolio which fulfill both the criteria (risk based capital requirement and Leverage Ratio).

It is well known that the bank owner is protected by Limited Liability which has provision that an bank owner is not liable to it's personal property, in case of bankruptcy of the bank. In \cite{Carney1998}, the author has discussed the history of Limited Liability. This extra protection causes moral hazard problems as discussed by \cite{Sinn2001, Biais2010} and many other authors. Acosta-Smith et.al. \cite{Acosta20} have mentioned the bank's decision problem with Limited Liability along with latest capital requirement given by Basel III. We have shown, in the article \cite{Barik2022}, that inclusion of Limited Liability in the decision model can decreases leveraged risk. With limited liability protection, in order to reach a target, the required amount of risk is less than that of the case without Limited Liability protection. 
 
Banks' problems in the intermediate time step are crucial for business cycles. It involves raising money, liquidating assets, meeting liabilities etc. 
Beside all these cash flows, the bank has to meet the regulatory requirements. Therefore modelling this phenomenon is an important and interesting research topic. The authors in \cite{Behn2019} and \cite{Cohen2013}, have discussed dynamic bank behaviour to comply with the regulatory requirements. Behn et al. \cite{Behn2019} have suggested four methods to meet the capital ratio requirements. In this study, we have analyzed these four. We have constructed a model which includes these four strategies, so that the bank can choose among these four methods to optimize their goals.

The outline of the paper is as follows. We begin with an detailed presentation of the models in Section \ref{Sec_Three_Problem_and_Model_Formulation}. Then, in Section \ref{Sec_Three_Theoretical Results} we present all the theoretical results, which are then illustrated by an example in section \ref{Sec_Three_An_Example}. Finally, Section \ref{Sec_Three_Conclusions} summarizes the main takeaways and concluding remarks of the work carried out.

\section{Problem and Model Formulation}
\label{Sec_Three_Problem_and_Model_Formulation}

For the problem setup, the banks are assumed to follow the firm based model, wherein they raise capital through debt and equity. The capital raised is assumed to be invested by the bank in three type of (loan) assets, namely, one safe asset and two risky assets. For our study, we have considered the contagion risk, and have adopted the payoff structure to be the same as in \cite{Acosta20}. The authors in \cite{Acosta20} have discussed about two type of assets, with the condition that when the safer asset defaults, then the risky assets also default. A dynamic model for bank behaviour is discussed in \cite{Behn2019}, with the incorporation of various regulatory provisions. In particular, the model in \cite{Behn2019} illustrates four different strategies, in order to adjust the capital ratio, so as to meet the regulatory requirements.

We begin by presenting a schematic diagram of the proposed model, in Figure \ref{Fig:Three_One}. We denote the safe asset as $L_{0}$ and the risky assets by $L_{1}$ and $L_{2}$ (with $L_{1}$ being less risky than $L_{2}$). Further the suffix $D$ indicates a default event, that is, $\{L_{i}D\}_{i=1,2}$ denotes the default of the asset $L_{i}$. Further, the description of the various parameters, to be used in the model formulation, are presented in Table \ref{Tab:2.1}.

\begin{figure}[ht]
\begin{center}	
\begin{tikzpicture}[sibling distance=15mm]
\node[rectangle,draw,xshift=0cm,yshift=0cm] (00) {$L_{0},L_{1},L_{2}$};
		
\node[rectangle,draw,xshift=5cm,yshift=2cm] (11) {$L_{0},L_{1},L_{2}$};
\node[rectangle,draw,xshift=5cm,yshift=-0.75cm] (12) {$L_{0},L_{1},L_{2}{\color{red}{D}}$};
\node[rectangle,draw,xshift=5cm,yshift=-2cm] (13) {$L_{0},L_{1}{\color{red}{D}},L_{2}{\color{red}{D}}$};
		
\node[rectangle,draw,xshift=10cm,yshift=3cm] (21) {$L_{0},L_{1},L_{2}$};
\node[rectangle,draw,xshift=10cm,yshift=2cm] (22) {$L_{0},L_{1},L_{2}{\color{red}{D}}$};
\node[rectangle,draw,xshift=10cm,yshift=1cm] (23) {$L_{0},L_{1}{\color{red}{D}},L_{2}{\color{red}{D}}$};
\node[rectangle,draw,xshift=10cm,yshift=-0.2cm] (24) {$L_{0},L_{1},L_{2}{\color{red}{D}}$};
\node[rectangle,draw,xshift=10cm,yshift=-1cm] (25) {$L_{0},L_{1}{\color{red}{D}},L_{2}{\color{red}{D}}$};
\node[rectangle,draw,xshift=10cm,yshift=-2cm] (26) {$L_{0},L_{1}{\color{red}{D}},L_{2}{\color{red}{D}}$};
		
\draw[dashed,->] (00) -- (11);
\draw[dashed,->] (00) -- (12);
\draw[dashed,->] (00) -- (13);
		
\draw[dashed,->] (11) -- (21);
\draw[dashed,->] (11) -- (22);
\draw[dashed,->] (11) -- (23);
		
\draw[dashed,->] (12) -- (24);
\draw[dashed,->] (12) -- (25);
\draw[dashed,->] (13) -- (26);
\end{tikzpicture}
\end{center}
\caption{Schematic presentation of the possible scenarios}
\label{Fig:Three_One}
\end{figure}
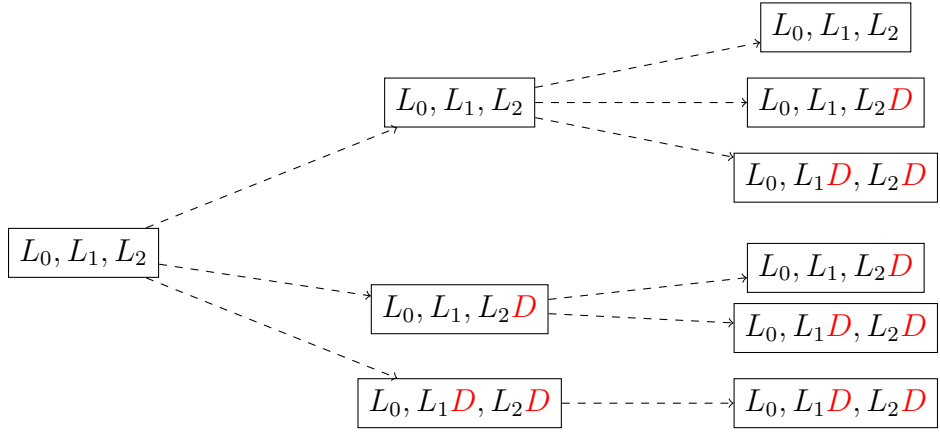

\begin{table}[h!]
\centering
\begin{tabular}{cc}
\hline
Variable & Description \\
\hline\hline
$X$ & Realization of the loan portfolio \\
$x$ & Portfolio of three assets \\
$\delta$ & Cost of equity \\
$\rho(x)$ & Risk of the loan portfolio $x$  \\
$\theta_{1}$ & Upper bound on risk at $t=0$ \\
$\theta_{2}$ & Upper bound on risk $t=1$ \\
$k_{lev}$ & Leverage Ratio restriction \\
$K(x)$ & Internal Ratings Based (IRB) capital requirement for portfolio $x$ \\
$e$ & Equity component of the bank's portfolio. \\
\hline
\end{tabular}
\caption{Description of the model variables}
\label{Tab:2.1}
\end{table}

We have constructed Model \ref{Model3.1} to determine the initial investment strategy, by considering the fact that the bank is subjected to limited liability \cite{Acosta20}, while at the same time meeting the regulatory capital requirements. The objective function maximizes the expected profit, after the payout to the debt-holders. It also incorporates the cost of equity. Here, the variable $\displaystyle{X=\sum\limits_{i=0}^{2}x_{i}L_{i}^{(1)}}$ is the random variable, where $L_{i}$ is the value of the $i$-th loan asset at time $t=1$. In case of default (the probability of which is $p_{i}$), $L_{i}$ takes the value $1-\lambda_{i}$. In the event of no default (the probability of which is $(1-p_{i})$), $L_{i}$ takes the value $(1+r_{i})$. It may be noted that we will denote $r_{0}=r$. Choosing a capital structure is also important, in the context of maximizing the payoff. Therefore, we have to maximize the objective function, with respect to $x$ and $e$.

\begin{model}[At $t=0$]
\label{Model3.1}	
\[\max_{x,e}\left[\mathbb{E}\left[\max\left(X-\beta_{ST}(1-e)-(1+r_{d}) \beta_{LT}(1-e),0\right)\right]-\delta e\right],\]
subject to the constraints,
\begin{enumerate}
\item $0 \leq x_{i} \leq 1~\forall~i=0,1,2$ (short selling is not permissible),
\item $\displaystyle{\sum\limits_{i=0}^{2} x_{i}=1}$,
\item $e \geq \max \left(k_{lev}, K(x)\right)$,
\item $\rho(x) \leq \theta_{1}$ (upper bound on risk).
\end{enumerate}
\end{model}
Here, $r_{d}$ is the interest rate being paid to the long term debt holders. This interest rate is lesser than the interest rate applicable for $L_{0}$, because banks make profit only when the interest rate that they charge on the loans is more than the interest rate that they pay the depositors. On the other hand, except safety bank also provides liquidity and transaction convenience to the depositors. Further, $\beta_{ST}$ denotes the fraction of depositors who are short term debt holders, and $\beta_{LT}(=1-\beta_{ST})$ is the fraction of long term debts issued by the bank. The modeling for banks decision problem at $t=1$ is a complex task, because it involves multiple possibilities. Accordingly, we enumerate the dynamics (at $t=1$) which needs to be incorporated into the model: 
\begin{enumerate}[(A)]
\item Bank can either raise money through debt and equity, or it can rebalance its assets to repay the short term debt and to meet regulatory requirements \cite{Behn2019}.
\item Bank will continue its business as long as the business is profitable.
\end{enumerate}

We have constructed a novel model to address this issue at the intermediate time step. This model has the advantage of liquidating and rebalancing assets, as well as raising money through debt and equity. In the model, if the equity holders expect to get more return for continuing their holdings into the next time point, then they  will hold on to their position until the next time point. Otherwise, they will liquidate their assets and collect the proceeds. We call thus as the ``equity holder's criteria''. Let us define $\tau_{i}$ as the default time for the $i$-th loan. Our model maximizes the expected return at time $t=1$ by fulfilling the capital requirement as well as the equity holder's criteria. The main contribution and novelty of our model is the inclusion of all the four strategies, suggested in \cite{Behn2019} (to maintain the Leverage Ratio) and the construction of which are enumerated below:  
\begin{enumerate}[(A)]
\item By issuing new equity (raised in the market or from retained earnings) to buy back debt, while keeping the assets constant.
\item By using new equity to fund asset growth, while keeping the debt constant.
\item By selling assets and using the proceeds to buy back debt, while keeping the equity constant.
\item By rebalancing assets towards less risky positions (thus decreasing average risk weights in the portfolios), while keeping assets and equity constant.	
\end{enumerate}
The last two strategies are included in the model via a term given by,
\[\mathbb{E}\left[\sum\limits_{i=0}^{2}x_{i}^{(1)}L_{i}^{(2)} \mathbbm{1}_{\tau_{i}>1}\right],\]
in the objective function, because it allows for rebalancing of assets in order to pay the debt holders. The variable $x^{(1)}=(x_{0}^{(1)},x_{1}^{(1)},x_{2}^{(1)})$, is the portfolio that is acquired at time $t=1$ and held till the final time point $t=2$. Let $v_{e}$ and $v_{d}$ the amount of new equity and debt (at $t=1$), respectively. It includes selling the assets without raising new equity and debt ($v_{e}=0=v_{d}$), which is incorporated in the third strategy. Further, the bank can also rebalance its assets to reduce the risk both with and without the condition $v_{e}=0=v_{d}$, after paying the short term debt also, thereby incorporating the fourth strategy. Our model allows to raise new debt and equity in the intermediate time step and also allow to restructure the loan portfolio. So the bank can issue new debt, keeping the asset constant and pay the short term debt, which is represented by the first strategy. Additionally, it can also simultaneously fund asset growth, which is captured in the second strategy.
\begin{model}[At $t=1$]
\label{Model3.2}
\[\max_{x^{(1)},v_{e},v_{d}} \left[\mathbb{E}\left[\sum\limits_{i=0}^{2} x_{i}^{(1)}L_{i}^{(2)} \mathbbm{1}_{\tau_{i}>1}\right]-\phi_{d}v_{d}-\phi_{e}v_{e}\right],\]
subject to the constraints of:
\begin{enumerate}
\item $\sum\limits_{i=0}^{2} \max\{x_{i}^{(1)}-x_{i},0\}L_{i}^{(1)} \mathbbm{1}_{\tau_{i}>1}+\text{ST}=\sum\limits_{i=0}^{2}\max\{x_{i}-x_{i}^{(1)},0\}L_{i}^{(1)}+(1-\phi_{e})v_{e}+(1-\phi_{d})v_{d}$,
\item Leverage Ratio $>$ Restriction,
\item $\displaystyle{\frac{{\frac{1}{1+r}} \mathbb{E}_{\mathbb{Q}}\left[ \left(\sum\limits_{i=0}^{2}x_{i}^{(1)}L_{i}^{(2)} \mathbbm{1}_{\tau_{i}>1}-v_{d}-\text{LT}\right)^{+}\right]}{e+v_{e}}\geq \min\left\{1+r_{e},\frac{\left(\sum\limits_{i=0}^{2} x_{i}L_{i}^{(1)}-\text{Debt}\right)^{+}}{e}\right\} \geq 0}$,
\item $\rho(x^{(1)}) \leq \theta_{2}$,
\item $ 0 \leq v_{e} \leq E$ and $ 0 \leq v_{d} \leq D$.
\end{enumerate}
\end{model}
In the model formulation, ST and LT, denote the short-term and long-term debt, respectively. Also, $r$ and $r_{e}$ is used to denote the risk-free rate and the target return on equity (for the equity holders to continue with their holdings), respectively. Further, $E$ and $D$ represent the upper bounds (or cap) on the amount of equity and debt (issued by the bank), respectively. Finally, $\phi_{e}$ and $\phi_{d}$, are the cost of issuing equity and debt, respectively. We now make the following observations with regards to the constraints, incorporated in the model.
\begin{enumerate}[(A)]
\item The objective function maximizes the return of expected payoff at final time step and minimizes the cost of raising capital.
\item The first constraint represents the budget restriction at the intermediate time step. It represents that the money used to repay the short-term debt and to invest more in loans is equal to the funds raised through debt, equity and selling some of its assets. The bank is able to make an investment if the loan has not defaulted yet. Therefore we have added the term $\mathbbm{1}_{\tau_{i}>1}$, where $\tau_{i}$ is the default time for the $i$-th asset.
\item The second constraint shows that at time $t=1$, the bank must satisfy the Leverage Ratio restriction.
\item The third constraint implies that the expected (with respect to risk-neutral probability measure $Q$) discounted return on equity at time $t=1$ should be greater than the minimum value between the target growth factor and the return on equity at time $t=1$. Shareholders allow for continuation of their current position, on to the next time step, provided that the constraint is satisfied. Else, they will liquidate their position. Therefore (as already noted) we call it as ``equity holder's constraint''. Here, we have compared the present return on equity with the expected discounted return on equity. For this valuation, we have considered risk-neutral probability measures motivated by the American call option pricing technique \cite{Cutland2012}.
\item The fourth constraint creates an upper bound on the risk for the loan portfolio at time $t=1$.
\item Finally, the fifth constraint represents that in any situation, there is an cap on the amount of equity and debt raised, respectively.
\end{enumerate}

\section{Theoretical Results}
\label{Sec_Three_Theoretical Results}

In this Section, we discuss some theoretical results which follows from the models presented in the preceding Section. In \cite{Barik2023}, the authors have studied the scenario where a bank cannot raise capital through debt and equity, in the intermediate time step ($t=1$). In this article, we study a general setup, where banks can issue debt and equity, as well as liquidate their assets. Behn et al. \cite{Behn2019} have presented a dynamic bank model to study the bank's behavior in order to meet the regulatory constraints. One of the most important regulatory aspects is the capital structure'. The regulatory framework, formulated by the Basel Committee on Banking Supervision (BCBS) had introduced the ``Leverage Ratio'', to facilitate the resilience of the banking system. Accordingly, the authors in \cite{Behn2019} have stated four strategies to fulfill the capital requirement. In this study, we have constructed a novel model by incorporating these four strategies via simple Model \ref{Model3.2}. It maximizes the expected return and minimizes the cost of issuing new debt or equity. It also considers the capital ratio requirement and equity holders' constraints. In the following description, we describe how Model \ref{Model3.2} incorporates all these four strategies.

According to the modeling setup, the bank can raise capital through debt, equity and liquidating assets, but the question is to ascertain as to which of these is the most preferable for a bank? Are these decisions scenario-dependent (the nodes in Figure \ref{Fig:Three_One} represent the scenario)? We have addressed these cases in the following results. Before going to the discussion of the statements of the theorems and their proofs, we have made an assumption for the brevity of the proof and to obtain a relation between the issuance of debt and equity simultaneously (the assumption is that all the realizations are non-negative at the final time step). 
\begin{assumption}
\label{Assumption 1}
\begin{eqnarray*}
&&\frac{{\frac{1}{1+r}} \mathbb{E}_{\mathbb{Q}} \left[\left(\sum\limits_{i=0}^{2}x_{i}^{(1)}L_{i}^{(2)} \mathbbm{1}_{\tau_{i}>1}-v_{d}-LT\right)^{+}\right]}{e+v_{e}}\\
&\geq&\frac{{\frac{1}{1+r}} \mathbb{E}_{\mathbb{Q}} \left[\left(\sum\limits_{i=0}^{2} x_{i}^{(1)}L_{i}^{(2)} \mathbbm{1}_{\tau_{i}>1}-v_{d}-LT\right)\right]}{e+v_{e}}\\ 
&\geq& \min\left\{1+r_{e},\frac{ \left(\sum_{i=0}^{2} x_{i}L_{i}^{(1)}-\text{Debt}\right)^{+}}{e}\right\}.
\end{eqnarray*}
\end{assumption}
When all the realization are positive then, the first term in the above relation becomes positive.
\begin{theorem}
\label{Theorem:3.1}
Bank prefers to raise capital from that source (debt and equity) which has the lower raising cost.
\end{theorem}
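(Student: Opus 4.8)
The plan is to fix the continuation portfolio $x^{(1)}$ and show that, conditional on this choice, the objective of Model~\ref{Model3.2} depends on the fund‑raising pair $(v_e,v_d)$ only through a single scalar, so that the debt‑versus‑equity comparison collapses to comparing two coefficients. First I would note that the leading term $\mathbb{E}\big[\sum_{i=0}^{2}x_i^{(1)}L_i^{(2)}\mathbbm{1}_{\tau_i>1}\big]$ does not involve $v_e$ or $v_d$ at all. Next, rearranging the budget constraint~1 gives
\[(1-\phi_e)v_e+(1-\phi_d)v_d=B(x^{(1)}):=\text{ST}+\sum_{i=0}^{2}\max\{x_i^{(1)}-x_i,0\}L_i^{(1)}\mathbbm{1}_{\tau_i>1}-\sum_{i=0}^{2}\max\{x_i-x_i^{(1)},0\}L_i^{(1)},\]
whose right‑hand side is a constant once $x^{(1)}$ and the prevailing scenario are fixed (feasibility forces $B(x^{(1)})\ge 0$). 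Eliminating $v_d=\big(B(x^{(1)})-(1-\phi_e)v_e\big)/(1-\phi_d)$ turns the objective into an affine function of $v_e$,
\[\big(\text{term in }x^{(1)}\text{ only}\big)\;-\;\frac{\phi_e-\phi_d}{1-\phi_d}\,v_e ;\]
equivalently, any exchange $v_e\mapsto v_e-\varepsilon$, $v_d\mapsto v_d+\varepsilon(1-\phi_e)/(1-\phi_d)$ preserving constraint~1 changes the objective by $+\varepsilon(\phi_e-\phi_d)/(1-\phi_d)$.

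Since $1-\phi_d>0$, the coefficient $(\phi_e-\phi_d)/(1-\phi_d)$ has the sign of $\phi_e-\phi_d$. Hence, for \emph{every} admissible $x^{(1)}$: if $\phi_d<\phi_e$ the restricted objective strictly increases as $v_e$ decreases, so the bank pushes $v_e$ down to the smallest feasible level and correspondingly raises as much as possible through debt; if $\phi_e<\phi_d$ the roles are reversed; and if $\phi_e=\phi_d$ the bank is indifferent. Because the restricted problem's optimal value equals the global optimal value (the global maximiser is feasible for the restricted problem, whose feasible set it contains), and because an affine objective with nonzero slope attains its maximum on an interval only at an endpoint, the global maximiser of Model~\ref{Model3.2} must itself place the costlier source at its minimum feasible level — which is exactly the assertion of the theorem.

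The remaining, and I expect the only delicate, point is to make ``smallest feasible level'' precise: after eliminating $v_d$, one must check that the admissible set of $v_e$ is a genuine interval, and identify its endpoints — this is where the bound on the capital that can be raised will come from. I would examine each surviving constraint as a function of $v_e$: the caps $0\le v_e\le E$, $0\le v_d\le D$ give one‑sided linear bounds; the risk bound $\rho(x^{(1)})\le\theta_2$ does not involve $v_e$; and the Leverage‑Ratio constraint~2 and the equity‑holder's constraint~3 each reduce to a ratio of two affine functions of $v_e$, a Möbius map whose derivative has constant sign and which therefore carves out a half‑line. Assumption~\ref{Assumption 1} is precisely what keeps constraint~3 in this tractable form, since it allows $\mathbb{E}_{\mathbb{Q}}\big[(\sum_i x_i^{(1)}L_i^{(2)}\mathbbm{1}_{\tau_i>1}-v_d-LT)^+\big]$ to be replaced by the bare expectation $\mathbb{E}_{\mathbb{Q}}\big[\sum_i x_i^{(1)}L_i^{(2)}\mathbbm{1}_{\tau_i>1}-v_d-LT\big]$, which is affine in $v_d$ and hence in $v_e$, preserving the monotonicity of the associated Möbius map. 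Intersecting these half‑lines and boxes yields an interval $[v_e^{\min},v_e^{\max}]$; the optimal amount raised from the costlier source is the relevant endpoint, and reading off which constraint binds there would give the explicit bound on the capital raised stated in the surrounding discussion. The main obstacle is thus not the exchange argument (which is immediate) but this feasibility bookkeeping — in particular confirming that constraint~3, even after the Assumption~\ref{Assumption 1} simplification, is genuinely monotone in $v_e$ once $v_d$ has been substituted out.
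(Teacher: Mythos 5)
The paper states Theorem~\ref{Theorem:3.1} without providing any proof --- the text passes directly from the theorem statement to the discussion of short-term debt --- so there is no argument of the authors' to compare yours against. Your exchange argument is correct and supplies the missing justification, and it is almost certainly the intended (implicit) reasoning made precise: the objective of Model~\ref{Model3.2} charges the issuance costs $\phi_d v_d+\phi_e v_e$ while the budget constraint fixes the net proceeds $(1-\phi_e)v_e+(1-\phi_d)v_d=B(x^{(1)})$ once $x^{(1)}$ and the scenario are fixed, so the problem in $(v_e,v_d)$ collapses to a one-dimensional affine maximization whose slope $-(\phi_e-\phi_d)/(1-\phi_d)$ has the sign that forces the costlier source to a boundary. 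Your check that this sign criterion agrees with the cost-per-net-dollar comparison $\phi_e/(1-\phi_e)$ versus $\phi_d/(1-\phi_d)$ is implicit but holds, since both differences share the sign of $\phi_e-\phi_d$ when $\phi_e,\phi_d<1$. The feasibility bookkeeping is also handled correctly in outline: each of the Leverage-Ratio and equity-holder constraints is a ratio of affine functions of $v_e$ with positive denominator ($e+v_e>0$), hence equivalent to a single affine inequality and therefore a half-line, so the feasible set in $v_e$ is an interval and the affine objective is maximized at an endpoint.

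Two small points deserve explicit mention if you write this up. First, the conclusion is a statement about the model \emph{as written}: the objective penalizes only the issuance cost $\phi_d v_d$, not the eventual repayment of $v_d$ at $t=2$; repayment enters only through constraint~3, where it merely shrinks the feasible interval, so your argument is valid for Model~\ref{Model3.2} but the economic interpretation ``prefers the cheaper source'' leans on that modeling choice. Second, the claim that the global maximizer places the costlier source at its minimum feasible level needs $\phi_e\neq\phi_d$ (you note the indifference case) and needs the feasible interval to be nondegenerate; when the constraints pin $(v_e,v_d)$ to a single point the statement is vacuous rather than false. Neither point is a gap --- your proof is sound and, unlike the paper, actually establishes the theorem.
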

In Model \ref{Model3.1}, we have considered that there are two types of debt, namely long term debt and short term debt. Short term debt affects the solvency of the bank, in the intermediate time step.  
\begin{theorem}
Larger amount of short term debt at initial time point increases the chance of default at the intermediate time step.
\end{theorem}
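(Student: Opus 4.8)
The plan is to reduce the claim to a monotonicity property of one scalar inequality. First I would make precise what ``default at the intermediate step'' means in terms of Model \ref{Model3.2}: consistent with the abstract, the bank defaults at $t=1$ exactly when it cannot meet the short-term claim $\text{ST}$, that is, when there is no choice of $(x^{(1)},v_{e},v_{d})$ satisfying the budget constraint (constraint~1) together with the caps $v_{e}\in[0,E]$, $v_{d}\in[0,D]$ (failure of the Leverage Ratio, the risk bound, or the equity holders' constraint is a regulatory shortfall or an equity-holder liquidation event, which the model distinguishes from default on the short-term claim). Solving constraint~1 for the newly raised funds,
\[(1-\phi_{e})v_{e}+(1-\phi_{d})v_{d}=\text{ST}+\sum_{i=0}^{2}\max\{x_{i}^{(1)}-x_{i},0\}L_{i}^{(1)}\mathbbm{1}_{\tau_{i}>1}-\sum_{i=0}^{2}\max\{x_{i}-x_{i}^{(1)},0\}L_{i}^{(1)},\]
the right-hand side is minimised over admissible rebalancings $x^{(1)}$ by liquidating the entire portfolio, $x^{(1)}=0$, at the value $\text{ST}-\sum_{i}x_{i}L_{i}^{(1)}$; since the left-hand side never exceeds $(1-\phi_{e})E+(1-\phi_{d})D$, a budget- and cap-feasible continuation exists if and only if
\[\text{ST}\ \le\ M:=\sum_{i=0}^{2}x_{i}L_{i}^{(1)}+(1-\phi_{e})E+(1-\phi_{d})D.\]
The crucial observation is that the resource bound $M$ depends on the realisation at $t=1$ (the node of Figure~\ref{Fig:Three_One}) and on the cap and cost parameters, but \emph{not} on $\text{ST}$; hence, on the tree of $t=1$ scenarios, the default event is exactly $\{\text{ST}>M\}$.

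The monotonicity is then immediate. Holding the initial portfolio $x$, the equity level $e$, and the caps $E,D$ fixed, raise the short-term debt from $\text{ST}_{1}$ to $\text{ST}_{2}>\text{ST}_{1}$ --- equivalently, in the notation of Model \ref{Model3.1}, increase the short-term fraction $\beta_{ST}$ so that $\text{ST}=\beta_{ST}(1-e)$ grows. Since $M$ is unchanged on every node, $\{\text{ST}_{1}>M\}\subseteq\{\text{ST}_{2}>M\}$ scenario by scenario, so $\mathbb{P}(\text{default at }t=1\mid \text{ST}_{2})\ \ge\ \mathbb{P}(\text{default at }t=1\mid \text{ST}_{1})$. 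For the strict version I would use the worst node of Figure~\ref{Fig:Three_One}, in which both risky loans have already defaulted so that $L_{1}^{(1)}=1-\lambda_{1}$, $L_{2}^{(1)}=1-\lambda_{2}$ and $M$ attains its least value $M_{\min}$: for any $\text{ST}_{1}\le M_{\min}<\text{ST}_{2}$ this (positive-probability) scenario turns from solvent to in default, so the inequality is strict; more generally the default probability increases strictly across any interval of $\text{ST}$ that straddles a value $M$ attained by a scenario of positive probability.

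The step I expect to be the real work is the bookkeeping behind the reduction: deciding what liquidation proceeds an already-defaulted loan yields, tracking how the indicators $\mathbbm{1}_{\tau_{i}>1}$ enter constraint~1, and confirming that the resulting bound $M$ is genuinely free of the short-term/long-term split. There is also a modelling caveat worth stating explicitly --- the comparison is \emph{ceteris paribus}, since the Model \ref{Model3.1} optimum $(x,e)$ may itself shift with $\beta_{ST}$; the clean statement therefore fixes the portfolio and the equity level and varies only the maturity split of the debt. Once the reduction is in place, the probabilistic conclusion follows from the nested events with no further computation.
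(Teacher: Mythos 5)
Your proof is correct and rests on the same underlying mechanism as the paper's: survival at $t=1$ is a threshold condition of the form $\text{ST}\le(\text{available resources})$, the resources do not depend on $\text{ST}$, and so the default event is monotone (nested) in $\beta_{ST}$. The difference is in how the threshold is derived. The paper simply tests the raw portfolio realization against the short-term claim at the worst node, $x_{0}(1+r)+x_{1}(1-\lambda_{1})+x_{2}(1-\lambda_{2})\ge\beta_{ST}(1-e)$, reads off the resulting upper bound on $\beta_{ST}$, and asserts the other nodes behave similarly; it never invokes the budget constraint of Model \ref{Model3.2} or the possibility of raising new funds. You instead derive the survival threshold $M=\sum_{i}x_{i}L_{i}^{(1)}+(1-\phi_{e})E+(1-\phi_{d})D$ from constraint~1 together with the caps, which is more faithful to the full model (a bank that can raise capped new debt and equity should not be declared in default merely because its liquidation value falls short of $\text{ST}$), and you make the probabilistic conclusion explicit via event nesting rather than leaving it at a single inequality. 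What your version buys is a cleaner and more defensible definition of the default event and a uniform treatment of all nodes; what the paper's version buys is brevity, at the cost of implicitly assuming the bank cannot (or does not) raise new funds when testing solvency. Your caveat that the comparison is ceteris paribus in $(x,e)$ is a point the paper glosses over and is worth keeping.
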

\begin{proof}
The bank faces default in the intermediate time step, if it is unable to meet the depositors' claim, that is, the bank is unable to return the money back to the short term debt holders. This happens if the realization from portfolio is less than the value of short term debt. So, if $R_{1}$ is the realization of the portfolio at time $t=1$, then the bank survives the worst case (Node-3) provided the following condition holds:
\[R_{1}\left[\text{For the worst case}\right] \geq \beta_{ST}(1-e)\Rightarrow 
x_{0}(1+r)+x_{1}(1-\lambda_{1})+x_{2}(1-\lambda_{2}) \geq \beta_{ST}(1-e).\]
This implies that,
\[\beta_{ST} \le \frac{x_{0}(1+r)+x_{1}(1-\lambda_{1})+x_{2}(1-\lambda_{2}) }{(1-e)}.\]
From the above inequalities, we can clearly see that for a large value of $\beta_{ST}$, the bank has lesser chance of surviving at $t=1$. Therefore, it increases the chance of default. The other nodes also behave in a similar manner. 
\end{proof}
Having fewer short-term debts has some advantages for the other nodes. In Node-2, if $x_{2}(1-\lambda_{2}) \geq \beta_{ST}(1-e)$, then the bank can meet depositors' claims without liquidating assets or issuing new debt or equity. Also, at Node-1, a lesser amount of short-term obligation reduces the amount of liquidation or issuing new debt and equity, which, in turn reduces the cost of raising money in the intermediate time step. Before proceeding further, we introduce the variable $Z_{1}$ as the realization, after paying the short-term debt, that is, $Z_{1}=R_{1}-\text{Short term debt}=R_{1}-\beta_{ST}(1-e)$.
\begin{theorem}
\label{Theorem:3.3}
Bank can survive the worst scenario at $t=1$, provided, $\displaystyle{x_{0}=\max\left\{0,\frac{x_{1} \lambda_{1}+x_{2}\lambda_{2}+\beta_{ST}(1-e)-1}{r}\right\}}$.  
\end{theorem}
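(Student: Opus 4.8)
The plan is to build directly on the worst-case solvency inequality already established in the proof of the preceding theorem and simply solve it for $x_{0}$. Recall that at the worst node (Node-3) both risky loans have defaulted while the safe loan $L_{0}$ is still alive at $t=1$, so the realization of the portfolio is $R_{1}=x_{0}(1+r)+x_{1}(1-\lambda_{1})+x_{2}(1-\lambda_{2})$, and the bank clears the short-term depositors' claim exactly when
\[ x_{0}(1+r)+x_{1}(1-\lambda_{1})+x_{2}(1-\lambda_{2})\ \geq\ \beta_{ST}(1-e). \]
First I would substitute the budget identity $x_{0}+x_{1}+x_{2}=1$ (constraint~2 of Model~\ref{Model3.1}) into the left-hand side, rewriting it as $1+rx_{0}-x_{1}\lambda_{1}-x_{2}\lambda_{2}$, which converts the inequality into $rx_{0}\geq x_{1}\lambda_{1}+x_{2}\lambda_{2}+\beta_{ST}(1-e)-1$.

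Next, since $r>0$ (it is the rate on the safe loan), dividing preserves the inequality and yields the lower bound
\[ x_{0}\ \geq\ \frac{x_{1}\lambda_{1}+x_{2}\lambda_{2}+\beta_{ST}(1-e)-1}{r}. \]
Combining this with the no-short-selling constraint $x_{0}\geq 0$ (constraint~1 of Model~\ref{Model3.1}) gives $x_{0}\geq\max\left\{0,\ \frac{x_{1}\lambda_{1}+x_{2}\lambda_{2}+\beta_{ST}(1-e)-1}{r}\right\}$, and the critical allocation in the statement is precisely the boundary case: the minimal amount the bank must hold in the safe asset to stay solvent in the worst scenario. I would close by spelling out the two regimes: when $x_{1}\lambda_{1}+x_{2}\lambda_{2}+\beta_{ST}(1-e)\leq 1$ the safe-asset floor is zero, since the surviving principal of the risky loans together with the short-term fraction already leaves the claim covered; otherwise it is the strictly positive quantity above.

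As for difficulty, there is essentially no analytic obstacle — the argument is a one-line rearrangement of an inequality that has already been justified. The only points requiring care are (i) correctly identifying ``the worst scenario'' as Node-3 and writing the right payoff there (the safe loan does \emph{not} default, so it contributes $1+r$, not $1-\lambda_{0}$), and (ii) handling the $\max$ cleanly by tracking the sign of the numerator, since the bound is vacuous when that numerator is non-positive. I would also remark that taking $x_{0}$ \emph{equal} to this maximum (rather than merely at least) is the allocation that frees up the most room for the remaining constraints — in particular the risk cap $\rho(x)\leq\theta_{1}$ and the downstream equity-holder's criterion in Model~\ref{Model3.2} — which is presumably why the statement is phrased as an equality.
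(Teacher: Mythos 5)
Your argument is correct and follows essentially the same route as the paper: start from the Node-3 solvency condition $x_{0}(1+r)+x_{1}(1-\lambda_{1})+x_{2}(1-\lambda_{2})\geq\beta_{ST}(1-e)$, use $\sum_{i}x_{i}=1$ to rewrite it as $x_{0}r\geq x_{1}\lambda_{1}+x_{2}\lambda_{2}+\beta_{ST}(1-e)-1$, and divide by $r>0$. You are in fact slightly more careful than the paper, since you make the use of the budget constraint explicit and justify the $\max$ with the no-short-selling constraint, both of which the paper leaves implicit.
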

\begin{proof}
Let us assume that the bank survives Node-3, at $t=1$. Then the net value of the assets is more than the debt, that is,
\begin{eqnarray*}
&&x_{0}(1+r)+x_{1}(1-\lambda_{1})+x_{2}(1-\lambda_{2}) \geq \beta_{ST}(1-e)\\
&\Rightarrow& x_{0}r \geq x_{1} \lambda_{1}+x_{2} \lambda_{2}+\beta_{ST}(1-e)-1\\
&\Rightarrow& x_{0} \geq \frac{x_{1} \lambda_{1}+x_{2} \lambda_{2}+\beta_{ST}(1-e)-1}{r}.
\end{eqnarray*}
This relation says that the investment, $x_{0}$ satisfying the above inequality can help the bank to survive the worst scenario, if all the other conditions are satisfied.
\end{proof}
The authors in \cite{Acosta20} have discussed the effect of the Leverage Ratio on bank stability, by considering different states ($s_{1}$ and $s_{2}$) to study the relation. In our study, the scenarios are presented in Figure \ref{Fig:Three_One}, and each scenario presents a state. The study in \cite{Acosta20} shows that more equity leads to stable banking. However, having more equity reduces the return on equity, and we prove this argument in the following theorem.
\begin{theorem}
In the intermediate time step, the return on equity reduces, as the amount of equity holders increases.
\end{theorem}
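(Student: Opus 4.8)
The plan is to work directly with the time-$t=1$ return on equity that already appears inside the third constraint of Model~\ref{Model3.2}, namely
\[
\text{ROE}_{1}\;:=\;\frac{\left(\sum\limits_{i=0}^{2}x_{i}L_{i}^{(1)}-\text{Debt}\right)^{+}}{e},
\]
and to show, scenario by scenario (one node of Figure~\ref{Fig:Three_One} at a time), that it is a non-increasing function of the equity level $e$. First I would isolate the $e$-dependence of the two ingredients. The realized asset value $R_{1}:=\sum_{i=0}^{2}x_{i}L_{i}^{(1)}$ does \emph{not} depend on $e$, because the portfolio weights satisfy $\sum_{i}x_{i}=1$ no matter how the unit of raised capital is split between debt and equity; whereas the debt obligation carried into $t=1$ takes the form $\text{Debt}=a(1-e)$ for a constant $a\ge1$ (determined by the short/long split $\beta_{ST},\beta_{LT}$ and the deposit rate $r_{d}$, exactly as in the payout term of Model~\ref{Model3.1}), so it is affine and strictly decreasing in $e$.

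Next I would split on the positive part. On the region where $R_{1}\le a(1-e)$ one has $\text{ROE}_{1}=0$; on the complementary region, writing $g(e):=R_{1}-a(1-e)$ so that $g'(e)=a$, the quotient rule gives
\[
\frac{d}{de}\,\frac{g(e)}{e}\;=\;\frac{a\,e-g(e)}{e^{2}}\;=\;\frac{a-R_{1}}{e^{2}} .
\]
Hence $\text{ROE}_{1}$ is non-increasing in $e$ precisely on the ``solvent and profitable'' range $R_{1}\ge a$ (strictly decreasing when $R_{1}>a$), because there the zero-region is vacuous for admissible $e$ and $g(e)/e$ has nonpositive derivative. It then remains to check $R_{1}\ge a$ in the scenarios of interest: in a no-default node $R_{1}=1+\sum_{i}x_{i}r_{i}\ge1+\min_{i}r_{i}>1+r_{d}\ge a$, using that the loan rates dominate $r_{d}$ (as noted below Model~\ref{Model3.1}) and $\beta_{LT}\le1$; in a node with partial or total default I would feed in the solvency inequality behind Theorem~\ref{Theorem:3.3} (and, if needed, Assumption~\ref{Assumption 1}) to keep $R_{1}$ above the debt line.

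The step I expect to be the main obstacle is exactly this last one: the monotonicity is genuinely true only on $\{R_{1}\ge a\}$, and in a sufficiently adverse default node (say $L_{1}$ and $L_{2}$ both in default with large recovery losses $\lambda_{1},\lambda_{2}$) one may have $R_{1}<a$, in which case the sign of the displayed derivative flips and additional equity would \emph{raise} $\text{ROE}_{1}$ off its zero floor. So the delicate part of the write-up is not a computation but the articulation of the right non-degeneracy hypothesis — the bank is solvent and at least weakly profitable at $t=1$ — under which $a-R_{1}\le0$. If one prefers an unconditional statement, a softer argument still delivers the qualitative message: since $\text{Debt}\ge0$ and $R_{1}\ge0$ we have $\text{ROE}_{1}\le R_{1}/e$ with $R_{1}$ independent of $e$, so the intermediate-step return on equity is always dominated by a strictly decreasing function of the equity level, which is precisely the ``more equity dilutes the per-unit return'' phenomenon the theorem asserts.
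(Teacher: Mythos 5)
Your core computation is the right one — with $a:=\beta_{ST}+\beta_{LT}(1+r_{d})=1+\beta_{LT}r_{d}$ and debt $a(1-e)$, the quotient rule gives a derivative with sign determined by $a-R_{1}$, which is exactly the structure of the paper's derivative $-\bigl(\mathbb{E}[R_{1}]-(1+\beta_{LT}r_{d})\bigr)/e^{2}$. But your write-up has a genuine gap precisely where you predict it: you work scenario by scenario with the realized value $R_{1}$, so you need the pathwise inequality $R_{1}\ge a$, and none of your proposed repairs delivers it. The solvency condition behind Theorem~\ref{Theorem:3.3} only guarantees $R_{1}\ge\beta_{ST}(1-e)$, i.e.\ that the short-term claim can be met; that is strictly weaker than $R_{1}\ge a$, since $a\ge 1>\beta_{ST}(1-e)$. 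Assumption~\ref{Assumption 1} concerns the time-$2$ payoff and does not bound $R_{1}$ from below either. And your fallback observation that $\text{ROE}_{1}\le R_{1}/e$ with $R_{1}/e$ decreasing does not prove that $\text{ROE}_{1}$ itself is decreasing — domination by a decreasing function says nothing about monotonicity. So as stated, your argument proves the theorem only on the event $\{R_{1}\ge a\}$ and leaves the adverse default nodes open (where, as you correctly note, the conclusion can actually reverse).

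The paper closes this gap by reading ``return on equity'' as the \emph{expected} return and proving the needed inequality unconditionally: since $x$ is a convex combination and $\mathbb{E}[L_{2}^{(1)}]\ge\mathbb{E}[L_{1}^{(1)}]\ge\mathbb{E}[L_{0}^{(1)}]=1+r$, one has
\begin{equation*}
\mathbb{E}[R_{1}]=\sum_{i=0}^{2}x_{i}\,\mathbb{E}[L_{i}^{(1)}]\;\ge\;1+r\;=\;\beta_{ST}(1+r)+\beta_{LT}(1+r)\;\ge\;\beta_{ST}+\beta_{LT}(1+r_{d})\;=\;1+\beta_{LT}r_{d},
\end{equation*}
using $r_{d}\le r$. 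With the expectation taken first (and the positive part dropped), the sign of $\partial r_{\text{equity}}/\partial e$ is determined without any node-by-node solvency hypothesis. If you want to keep your pathwise formulation, you must either restrict the claim to scenarios with $R_{1}\ge a$ (making the theorem conditional) or pass to expectations as the paper does; the averaging is what makes the ``more equity dilutes return'' statement hold globally.
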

\begin{proof}
Let $\mathbb{E}[R_{1}]$ denote the expectation of the random variable $R_{1}$, presenting the realization of the loan portfolio at time $t=1$. Then the expected return on equity is given by,
\begin{eqnarray*}
r_{\text{equit}y}&:=&\frac{\mathbb{E}[R_{1}]-\beta_{ST}(1-e)-\beta_{LT}(1+r_{d})(1-e)}{e}\\
&=& \frac{\mathbb{E}[R_{1}]-(1-e)(\beta_{ST}+\beta_{LT}(1+r_{d}))}{e} \\
&=& \frac{\mathbb{E}[R_{1}]-(1-e)(1+\beta_{LT} r_{d})}{e} \\
&=& \frac{\mathbb{E}[R_{1}]-(1+\beta_{LT}r_{d})+e(1+\beta_{LT}r_{d})}{e}.
\end{eqnarray*}
Now differentiating with respect to $e$, we get,
\begin{eqnarray*}
\frac{\partial r_{\text{equity}}}{\partial e}&=& \frac{e(1+\beta_{LT}r_{d})-(\mathbb{E}[R_{1}]-(1+\beta_{LT}r_{d})+e(1+\beta_{LT}r_{d}))}{e^{2}} \\
&=&\frac{-(\mathbb{E}[R_{1}]-(1+\beta_{LT}r_{d}))}{e^{2}}.
\end{eqnarray*}
Now in order to show that more amount of equity reduces the expected return on equity, we need to show that $\displaystyle{\frac{\partial r_{\text{equity}}}{\partial e} \leq 0}$, and to prove this, it only remains to show that $\mathbb{E}[R_{1}] \geq (1+\beta_{LT}r_{d})$. From the definition we have,
\begin{eqnarray*}
\mathbb{E}[R_{1}]&=&x_{0}\mathbb{E}[L_{0}^{(1)}]+x_{1}\mathbb{E}[L_{1}^{(1)}]+x_{2}\mathbb{E}[L_{2}^{(1)}] \\
&\geq& \mathbb{E}[L_{0}^{(1)}] 
~~~~~~(\text{because}~{\mathbb{E}[L_{2}^{(1)}] \geq \mathbb{E}[L_{1}^{(1)}] \geq \mathbb{E}[L_{0}^{(1)}]}) \\
&=& (1+r) \\
&=& \beta_{ST}(1+r)+\beta_{LT}(1+r) \\
&\geq& \beta_{ST}+\beta_{LT}(1+r) \\
&\geq& \beta_{ST}+\beta_{LT}(1+r_{d})~~~~~~(\text{as}~r_{d} \leq r)\\ 
&=&1+\beta_{LT}r_{d}
\end{eqnarray*}
This completes the proof.
\end{proof}
The preceding theorem discusses about the expected return on equity at the intermediate time step. Next, we study the change of return on equity with the amount raised at the intermediate time step, through equity $v_{e}$ and debt $v_{d}$. Let $V_{\text{equity}}(v_{e},v_{d})$ denotes the expected discounted return on equity, while raising $v_{e}$ amount of equity and $v_{d}$ amount of debt. Using Assumption \ref{Assumption 1}, we obtain,
\[V_{\text{equity}}(v_{e},v_{d})=\frac{Z_{1}+(1-\phi_{e})v_{e}+(1-\phi_{d})v_{d}-\frac{v_{d}+\beta_{LT}(1-e)(1+r_{d})^{2}}{1+r}}{e+v_{e}}.\] 
In Figure \ref{Fig:Three_ROE} we demonstrate the change of $V_{\text{equity}}(v_{e},v_{d})$ versus $v_{d}$($v_{e}=0$) and $v_{e}$($v_{d}=0$).
\begin{figure}[h!]
\begin{center}   
\begin{tikzpicture}
\begin{axis}[
width=7cm, height=7cm,
at={(-500,0)}, % Position of the first plot
xlabel={$v_{d}$},
ylabel={$V_{\text{equity}}(0,v_{d})$},
domain=0:0.5, % Adjust domain as necessary
samples=100,
axis lines=middle, % Adds arrows to the axes
%xtick=\empty, % Removes x-axis ticks
%ytick=\empty,  % Removes y-axis ticks
smooth
]
\addplot[
thick,
blue,
]
{(0.055 + (1-0.05-1/1.09)*x)/(0.06)};
\end{axis}

\begin{axis}[
width=7cm, height=7cm,
at={(500,0)}, % Position of the first plot
xlabel={$v_{e}$},
ylabel={$V_{\text{equity}}(v_{e},0)$},
domain=0:0.5, % Adjust domain as necessary
samples=100,
axis lines=middle, % Adds arrows to the axes
%xtick=\empty, % Removes x-axis ticks
%ytick=\empty,  % Removes y-axis ticks
smooth
]
\addplot[
thick,
blue,
]
{(0.055 + 0.09*x)/(0.06 + x)};
\end{axis}
\end{tikzpicture}
\end{center}
\caption[Return on equity with $v_{d}$ and $v_{e}$]{This figure shows the change of return on equity with issuing new debt and equity in the intermediate time step $\left(\displaystyle{\phi_{d} \leq \frac{r}{1+r}}\right)$.}
\label{Fig:Three_ROE}
\end{figure}
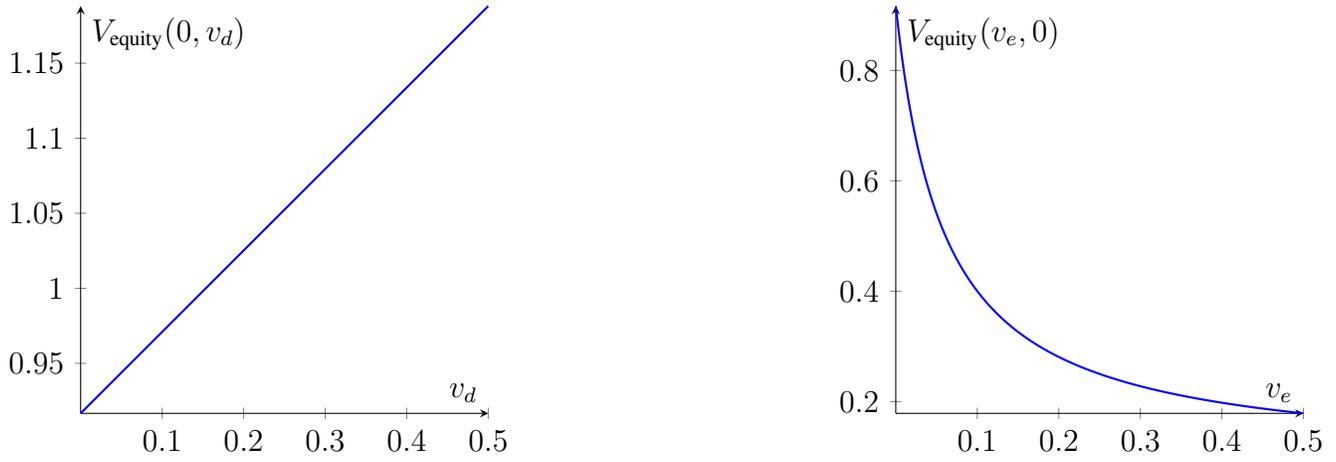
In the next theorem we study the change of $V_{\text{equity}}$ with $v_{d}$, without raising any equity ($v_{e}=0$).
\begin{theorem}
Issuing fund only through debt (without equity) increase the discounted expected payoff of the equity, provided $\displaystyle{\phi_{d} \leq \frac{r}{1+r}}$.
\end{theorem}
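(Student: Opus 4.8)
The plan is to argue directly from the closed-form expression for $V_{\text{equity}}(v_{e},v_{d})$ obtained above (using Assumption \ref{Assumption 1}), specialised to $v_{e}=0$. With $v_{e}=0$ that expression becomes
\[
V_{\text{equity}}(0,v_{d})=\frac{Z_{1}+(1-\phi_{d})v_{d}-\dfrac{v_{d}+\beta_{LT}(1-e)(1+r_{d})^{2}}{1+r}}{e},
\]
and I would observe at once that, since $Z_{1}$, $e$, $\beta_{LT}$, $r_{d}$ and $r$ do not involve $v_{d}$, this is an \emph{affine} function of $v_{d}$. Hence whether raising debt helps the equity holders is decided entirely by the sign of the slope, i.e. the coefficient of $v_{d}$.

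The next step is the one computation needed: collecting the $v_{d}$-coefficient and simplifying,
\[
\frac{\partial V_{\text{equity}}(0,v_{d})}{\partial v_{d}}=\frac{1}{e}\left(1-\phi_{d}-\frac{1}{1+r}\right)=\frac{r-\phi_{d}(1+r)}{e(1+r)}.
\]
Because $e>0$ (equity is strictly positive) and $1+r>0$, this slope is non-negative if and only if $r-\phi_{d}(1+r)\geq 0$, that is, if and only if $\phi_{d}\leq\frac{r}{1+r}$ --- exactly the hypothesis of the theorem. Consequently $v_{d}\mapsto V_{\text{equity}}(0,v_{d})$ is non-decreasing on the feasible interval $0\le v_{d}\le D$ (constraint 5 of Model \ref{Model3.2}), so any admissible issuance $v_{d}>0$ gives $V_{\text{equity}}(0,v_{d})\ge V_{\text{equity}}(0,0)$, with strict inequality when $\phi_{d}<\frac{r}{1+r}$ and $v_{d}>0$. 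This is the claimed statement.

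There is no real analytic obstacle here: the entire content is the algebraic reduction of the $v_{d}$-coefficient to the clean form $\frac{r-\phi_{d}(1+r)}{e(1+r)}$, together with the remark that the terms constant in $v_{d}$ (namely $Z_{1}$ and $\beta_{LT}(1-e)(1+r_{d})^{2}/(1+r)$) are irrelevant to monotonicity. The only points I would be careful to state explicitly are that $e>0$ so the denominator has a fixed sign, that the admissible range of $v_{d}$ is $[0,D]$ so the comparison is made over values the bank can actually choose, and that the expression for $V_{\text{equity}}$ itself rests on Assumption \ref{Assumption 1}, so the statement is to be understood within that regime.
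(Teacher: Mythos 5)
Your proposal is correct and follows essentially the same route as the paper: both specialise $V_{\text{equity}}$ to $v_{e}=0$, note the expression is affine in $v_{d}$, and check that the slope $\frac{1}{e}\left(1-\phi_{d}-\frac{1}{1+r}\right)$ is non-negative exactly when $\phi_{d}\leq\frac{r}{1+r}$. Your version is slightly more careful than the paper's (you make explicit that $e>0$, that the constant terms are irrelevant, and that the comparison is over the feasible range $[0,D]$), but the argument is the same.
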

\begin{proof}
Let $R_{1}$ denote the realization of the portfolio at time $t=1$. Let us consider that the bank has issued $v_{d}$ amount of debt. Then it has paid $\phi_{d}v_{d}$ to raise this amount, and it has the same amount of equity $e$. Then,
\[V_{\text{equity}}(0,v_{d})=\frac{Z_{1}+(1-\phi_{d})v_{d}-\frac{v_{d}}{1+r}+\frac{\beta_{LT} (1-e)(1+r_{d})^{2}}{1+r}}{e}.\]
Here $Z_{1}=R_{1}-\beta_{ST}(1-e)$ is the realization after paying the short term debt. We get the value of the equity by discounted expected realization of equity at the final time point, with the expectation being taken with respect to risk neutral probability measure. To get the value of equity per unit volume, we divide it by the amount of equity. Now differentiating w.r.t. $v_{d}$, we get,
\[\frac{\partial}{\partial v_{d}}V_{\text{equity}}(0,v_{d})=\frac{1-\phi_{d}-\frac{1}{1+r}}{e}.\]
Hence $\displaystyle{\frac{\partial V_{equity}}{\partial v_{d}}}$ will be an increasing function, provided,
\[1-\phi_{d}-\frac{1}{1+r} \geq 0 \Rightarrow \frac{r}{1+r} \geq \phi_{d}.\] 
This completes the proof.
\end{proof}
At the time $t=1$ and Node-3, for Model \ref{Fig:Three_One}, the two loans have already defaulted. If the bank is able to survive this scenario, it can issue new debt to increase profit for the equity-holders, enabling them to continue their holding to the next time step.
\begin{corollary}
For $\displaystyle{\phi_{d} \geq \frac{r}{1+r}}$, $V_{\text{equity}}(0,v_{d})$ is decreasing function. Then the bank has an upper bound on raising new debt from equity holders constraint. 
\end{corollary}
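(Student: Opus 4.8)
The plan is to extract both assertions from the derivative already computed in the proof of the preceding theorem, and then to convert the equity holders' constraint into an explicit affine inequality in $v_{d}$ whose solution set is a bounded interval.

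First I would recall that, taking $v_{e}=0$,
\[V_{\text{equity}}(0,v_{d})=\frac{Z_{1}+(1-\phi_{d})v_{d}-\frac{v_{d}}{1+r}+\frac{\beta_{LT}(1-e)(1+r_{d})^{2}}{1+r}}{e},\]
so that $\displaystyle{\frac{\partial}{\partial v_{d}}V_{\text{equity}}(0,v_{d})=\frac{1-\phi_{d}-\frac{1}{1+r}}{e}}$. Since $e>0$, the sign of this derivative equals the sign of $\frac{r}{1+r}-\phi_{d}$, and therefore $\phi_{d}\geq\frac{r}{1+r}$ forces it to be $\leq 0$; this is the first claim that $v_{d}\mapsto V_{\text{equity}}(0,v_{d})$ is (weakly) decreasing, and strictly decreasing when $\phi_{d}>\frac{r}{1+r}$, which is the regime I would use to produce the bound.

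Next I would invoke the third constraint of Model~\ref{Model3.2}, which under Assumption~\ref{Assumption 1} reads $V_{\text{equity}}(0,v_{d})\geq m$, where
\[m:=\min\left\{1+r_{e},\ \frac{\left(\sum_{i=0}^{2}x_{i}L_{i}^{(1)}-\text{Debt}\right)^{+}}{e}\right\}\geq 0\]
is independent of $v_{d}$. Because $V_{\text{equity}}(0,\cdot)$ is affine with non-positive (negative, when $\phi_{d}>\frac{r}{1+r}$) slope, the feasible set $\{v_{d}\geq 0:\ V_{\text{equity}}(0,v_{d})\geq m\}$ is an interval $[0,\bar v_{d}]$; solving the inequality explicitly,
\[\bar v_{d}=\frac{Z_{1}+\dfrac{\beta_{LT}(1-e)(1+r_{d})^{2}}{1+r}-m\,e}{\phi_{d}+\dfrac{1}{1+r}-1},\]
which is finite and is the advertised upper bound on the new debt; intersecting with the cap $v_{d}\leq D$ from the fifth constraint yields the admissible range $0\leq v_{d}\leq\min\{\bar v_{d},D\}$.

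The delicate points I foresee are bookkeeping rather than conceptual: (i) the borderline case $\phi_{d}=\frac{r}{1+r}$, where the derivative vanishes and $V_{\text{equity}}(0,\cdot)$ is constant, so the equity holders' constraint holds either for all $v_{d}$ or for none and no finite bound is generated — this case should be stated separately or excluded; and (ii) verifying that the feasibility interval is non-empty, i.e.\ $V_{\text{equity}}(0,0)\geq m$, which is precisely the requirement that the bank already satisfies the equity holders' constraint before issuing any debt (otherwise the statement is vacuous, and exactly this positivity is what makes the numerator of $\bar v_{d}$ non-negative). Assembling (i)--(ii) with the affine solve above completes the argument. I expect step (ii) — pinning down feasibility at $v_{d}=0$ and thereby the sign of $\bar v_{d}$ — to be the only genuine obstacle.
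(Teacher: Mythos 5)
Your proposal is correct and follows essentially the same route as the paper: the monotonicity is read off from the sign of $\frac{\partial}{\partial v_{d}}V_{\text{equity}}(0,v_{d})=\frac{1-\phi_{d}-\frac{1}{1+r}}{e}$, and the cap comes from solving the affine equity holders' inequality for $v_{d}$ — the only cosmetic difference being that you keep the right-hand side as a single constant $m=\min\{1+r_{e},\text{current ROE}\}$ while the paper splits into the two branches of the minimum, and your unified bound $\bar v_{d}$ specializes exactly to the paper's two expressions. One caveat: you copied the expression for $V_{\text{equity}}(0,v_{d})$ with $+\frac{\beta_{LT}(1-e)(1+r_{d})^{2}}{1+r}$ (a sign typo already present in the preceding theorem of the paper; the general definition of $V_{\text{equity}}(v_{e},v_{d})$ subtracts this term), so the numerator of your explicit $\bar v_{d}$ should read $Z_{1}-\frac{\beta_{LT}(1-e)(1+r_{d})^{2}}{1+r}-me$; this does not affect the derivative or the structure of the argument, and your added remarks on the borderline case $\phi_{d}=\frac{r}{1+r}$ and on feasibility at $v_{d}=0$ are sensible refinements the paper omits.
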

\begin{proof}
The proof of the first part is quite obvious from the preceding theorem. Now to get the cap, we use the equity holder's constraint in the Model \ref{Model3.2},
\begin{enumerate}[(A)]
\item When the expected return is less than the target, then,
\begin{eqnarray*}
&&Z_{1}+(1-\phi_{d})v_{d}-\frac{v_{d}}{1+r}-\beta_{LT}(1-e)\frac{(1+r_{d})^{2}}{1+r}\geq Z_{1}-\beta_{LT}(1-e)(1+r_{d}) \\
&\Rightarrow& \left(1-\phi_{d}-\frac{1}{1+r}\right)v_{d} \geq \beta_{LT}(1-e)(1+r_{d})\left(\frac{1+r_{d}}{1+r}-1\right) \\
&\Rightarrow& \left(\frac{r}{1+r}-\phi_{d}\right)v_{d} \geq \beta_{LT}(1-e)(1+r_{d})\left(\frac{r_{d}-r}{1+r}\right).
\end{eqnarray*}
From our construction, we have, $r_{d}<r$, and therefore multiplying by $-1$, we get,
\begin{eqnarray*}
&&\left(\phi_{d} - \frac{r}{1+r}\right)v_{d} \leq \beta_{LT}(1-e)(1+r_{d})\left(\frac{r-r_{d}}{1+r}\right) \\
&\Rightarrow& v_{d} \leq \frac{\beta_{LT}(1-e)(1+r_{d})\left(\frac{r-r_{d}}{1+r}\right)}{\left(\phi_{d}-\frac{r}{1+r}\right)}.
\end{eqnarray*}
\item For the case when the return exceeds the target return, we have,
\begin{eqnarray*}
&&\frac{Z_{1}+(1-\phi_{d})v_{d}-\frac{v_{d}}{1+r}-\beta_{LT}(1-e)\frac{(1+r_{d})^{2}}{1+r}}{e} \geq (1+r_{e})\\
&\Rightarrow& Z_{1}-\beta_{LT}(1-e)\frac{(1+r_{d})^{2}}{1+r}-(1+r_{e})e \geq \left(\phi_{d}-\frac{r}{1+r}\right)v_{d} \\
&\Rightarrow& v_{d} \leq \frac{Z_{1}-\beta_{LT}(1-e)\frac{(1+r_{d})^{2}}{1+r}- (1+r_{e})e}{\left(\phi_{d}-\frac{r}{1+r}\right)v_{d}}.
\end{eqnarray*}
\end{enumerate}
The above inequalities gives an upper bound on how much debt the bank can issue when the raising cost is high.
\end{proof}
Raising money through debt decreases the net debt of the bank, and hence it reduces the Leverage Ratio.
\begin{theorem}
Issuing money through debt decreases the Leverage Ratio.
\end{theorem}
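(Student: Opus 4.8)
The plan is to write the Leverage Ratio at $t=1$ as in the definition recalled in Section \ref{Sec_Three_Introduction}, $\text{LR}=\dfrac{\text{Tier 1 capital}}{\text{Total exposure measure}}$, to express its numerator (the bank's equity, which in Model \ref{Model3.2} is $e+v_{e}$) and its denominator (the bank's exposure/asset base) as functions of the newly issued debt $v_{d}$ through the budget constraint (the first constraint of Model \ref{Model3.2}), and then to show that $\text{LR}$ is strictly decreasing in $v_{d}$. As in the earlier theorems, I would treat the transparent case $v_{e}=0$ with no rebalancing ($x^{(1)}=x$) first, and then indicate why the monotonicity survives in the full model.

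The key steps are as follows. First, observe that issuing new debt does not change the Tier 1 capital: in the model the equity is $e$ (or $e+v_{e}$), which does not involve $v_{d}$. Second, compute the total exposure as a function of $v_{d}$ from the budget constraint: the usable proceeds $(1-\phi_{d})v_{d}$ reduce, unit for unit, the asset liquidation that the bank would otherwise have had to carry out to repay the short-term debt, so the exposure base equals $A+(1-\phi_{d})v_{d}$, where $A$ denotes the exposure base when no new debt is raised; equivalently, if the proceeds instead fund asset growth, the exposure again increases by $(1-\phi_{d})v_{d}$. In every admissible deployment the denominator is non-decreasing in $v_{d}$, and strictly increasing whenever $\phi_{d}<1$. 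Third, write $\text{LR}(v_{d})=\dfrac{e}{A+(1-\phi_{d})v_{d}}$ (replacing $e$ by $e+v_{e}$ in the general case) and differentiate,
\[\frac{\partial}{\partial v_{d}}\,\text{LR}(v_{d})=\frac{-e\,(1-\phi_{d})}{\bigl(A+(1-\phi_{d})v_{d}\bigr)^{2}}<0,\]
since $e>0$ and $\phi_{d}<1$. Hence raising money through debt strictly lowers the Leverage Ratio. Fourth, note that if one instead measures Tier 1 capital as assets minus liabilities, the budget identity shows the numerator also falls, by the frictional cost $\phi_{d}v_{d}$, so the conclusion only gets stronger; and that at any fixed $v_{e}$ and any fixed rebalancing choice the partial derivative in $v_{d}$ retains the same sign, which is precisely what the theorem asserts.

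I expect the second step — reading the dependence of the exposure on $v_{d}$ off the budget constraint — to be the main obstacle, because the way the proceeds are deployed (cutting back the asset liquidation needed to repay the short-term debt, versus funding additional asset growth, possibly combined with a rebalancing of $x^{(1)}$) is itself a decision variable, so the balance sheet is not fixed in advance. The point that rescues the argument is the uniform fact that each unit of new debt delivers at most $(1-\phi_{d})<1$ units of usable funds, so the liability side always grows at least as fast as the asset side and the exposure base never shrinks when $v_{d}$ increases; I would make this precise by parametrising the deployment and checking that the exposure base has the form $A+c\,v_{d}$ with $c\in[0,1-\phi_{d}]$, for which the displayed derivative, with $c$ in place of $1-\phi_{d}$, remains $\le 0$ (and $<0$ as soon as $c>0$). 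A secondary point is the valuation convention for the balance sheet: I would use $t=1$ book values, consistent with the Basel definition in Section \ref{Sec_Three_Introduction}, and remark that the risk-neutral $t=2$ valuation used in Assumption \ref{Assumption 1} yields the same signs.
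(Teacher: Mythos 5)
Your overall strategy --- write the Leverage Ratio as an explicit function of $v_{d}$ and differentiate --- is the same as the paper's, but you and the paper disagree about what the numerator is, and that disagreement hides the one substantive ingredient of the paper's proof. The paper does not take Tier 1 capital to be the book equity $e$ (constant in $v_{d}$); it takes it to be the net realized asset value
\[
Z_{1}+(1-\phi_{d})v_{d}-v_{d}-\beta_{LT}(1-e)(1+r_{d}),
\]
i.e., assets (including the usable proceeds $(1-\phi_{d})v_{d}$) minus all liabilities (the new debt $v_{d}$ plus the long-term debt), divided by the exposure $Z_{1}+v_{d}$. (Note also that the exposure grows by the full face value $v_{d}$, not by $(1-\phi_{d})v_{d}$ as in your second step; the same convention appears in the neighbouring theorems on the Leverage Ratio cap and on equity issuance.) With this numerator the quotient rule gives
\[
\frac{\partial}{\partial v_{d}}\left(\text{Leverage Ratio}\right)=\frac{-Z_{1}(1+\phi_{d})+\beta_{LT}(1-e)(1+r_{d})}{(Z_{1}+v_{d})^{2}},
\]
whose sign is \emph{not} automatic: the paper must invoke the survival condition $Z_{1}\geq\beta_{LT}(1-e)(1+r_{d})$ to conclude it is negative. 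If the bank is already under water, the numerator of the ratio is negative, and inflating the denominator then pushes the ratio \emph{up}, so the claimed monotonicity can fail.

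This is precisely the point your fourth step glosses over: you assert that measuring Tier 1 capital as assets minus liabilities makes the numerator fall by $\phi_{d}v_{d}$ and hence ``the conclusion only gets stronger,'' but a falling numerator combined with a rising denominator forces the quotient down only when the numerator is nonnegative --- which is exactly the survival hypothesis the paper makes explicit. Your main-line argument (numerator fixed at $e+v_{e}$, denominator increasing) is internally consistent and does yield a decrease, but it proves the statement for a different Leverage Ratio than the one the model actually constrains. To match the paper, rewrite the proof with the net-asset-value numerator and the exposure $Z_{1}+v_{d}$, and state the condition $Z_{1}\geq\beta_{LT}(1-e)(1+r_{d})$ as an explicit hypothesis.
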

\begin{proof}
For raising $v_{d}$ amount of debt, the Leverage Ratio is given by,
\[\text{Leverage Ratio}:=\frac{Z_1+(1-\phi_{d})v_{d}-v_{d}-\beta_{LT}(1-e)(1+r_{d})}{Z_{1}+v_{d}}.\]
Partially differentiating with respect to $v_{d}$, we get,
\begin{eqnarray*}
\frac{\partial}{\partial v_{d}}\left(\text{Leverage Ratio}\right)&=& \frac{-\phi_{d}(Z_{1}+v_{d})-(Z_1-\phi_{d}v_{d}-\beta_{LT}(1-e)(1+r_{d}))}{(Z_{1}+v_{d})^{2}}\\
&=& \frac{-Z_{1}(1+\phi_{d})+\beta_{LT}(1-e)(1+r_{d})}{(Z_{1}+v_{d})^{2}}
\end{eqnarray*}
If the bank survives, then its realization must be more than the debt, that is,
\[Z_{1} \geq \beta_{LT}(1-e)(1+r_{d}) \Rightarrow Z_{1}(1+\phi_{d}) \geq \beta_{LT}(1-e)(1+r_{d}).\]
The last inequality implies that $\displaystyle{\frac{\partial}{\partial v_{d}}\left(\text{Leverage Ratio}\right)}$ is negative. Hence the Leverage Ratio decreases with $v_{d}$.
\end{proof}
Issuing new debt increases leverage, and so the Leverage Ratio restriction creates a cap for the amount of debt issued.
\begin{theorem}
Leverage Ratio restriction implements a cap on the amount of debt can be issued.
\end{theorem}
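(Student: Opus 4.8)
The plan is to read off the cap directly from the explicit formula for the Leverage Ratio established in the previous theorem, together with the monotonicity already proved there. Recall that after raising $v_{d}$ units of debt,
\[\text{Leverage Ratio}=\frac{Z_{1}-\phi_{d}v_{d}-\beta_{LT}(1-e)(1+r_{d})}{Z_{1}+v_{d}},\]
and that this is a decreasing function of $v_{d}$. The second constraint of Model \ref{Model3.2} requires that this ratio stay above the regulatory threshold, i.e. $\text{Leverage Ratio}\geq k_{lev}$. Since the left-hand side is decreasing in $v_{d}$ while the right-hand side is a constant, the feasible set of $v_{d}$ is an interval of the form $[0,\bar v_{d}]$, which is precisely the assertion of the theorem.

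Concretely, the steps I would carry out are: (i) write the constraint $Z_{1}-\phi_{d}v_{d}-\beta_{LT}(1-e)(1+r_{d})\geq k_{lev}(Z_{1}+v_{d})$, clearing the denominator (legitimate because $Z_{1}+v_{d}>0$ for a surviving bank); (ii) collect the $v_{d}$ terms on one side to get $(k_{lev}+\phi_{d})v_{d}\leq (1-k_{lev})Z_{1}-\beta_{LT}(1-e)(1+r_{d})$; (iii) since $k_{lev}+\phi_{d}>0$, divide to obtain the explicit cap
\[v_{d}\leq \frac{(1-k_{lev})Z_{1}-\beta_{LT}(1-e)(1+r_{d})}{k_{lev}+\phi_{d}}=:\bar v_{d};\]
(iv) combine with the cap $v_{d}\leq D$ from the fifth constraint, so the admissible amount of new debt is $v_{d}\in[0,\min\{D,\bar v_{d}\}]$.

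The only genuine point to address — and the one I expect to be the main obstacle — is to argue that $\bar v_{d}$ is non-negative, so that the Leverage Ratio restriction does not by itself rule out continuation; i.e. that the bank's post-short-term-debt realization is large enough, $(1-k_{lev})Z_{1}\geq\beta_{LT}(1-e)(1+r_{d})$. This should follow from the survival inequality $Z_{1}\geq\beta_{LT}(1-e)(1+r_{d})$ used in the previous theorem, provided $k_{lev}$ is small enough (in practice $k_{lev}$ is a few percent), and I would state this as the standing hypothesis under which the cap is the binding one; if it fails, the bank cannot meet the Leverage Ratio at $t=1$ at all and must instead adjust through the asset-side strategies. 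The remaining manipulations are routine algebra and need no further comment.
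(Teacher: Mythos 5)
Your proposal is correct and follows essentially the same route as the paper: clear the denominator in the constraint $\text{Leverage Ratio}\geq k_{lev}$, collect the $v_{d}$ terms, and divide by $k_{lev}+\phi_{d}$ to obtain exactly the paper's cap $v_{d}\leq\frac{(1-k_{lev})Z_{1}-\beta_{LT}(1-e)(1+r_{d})}{k_{lev}+\phi_{d}}$. Your added remarks on combining with the bound $D$ and on the non-negativity of the cap are sensible refinements the paper omits, but they do not change the argument.
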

\begin{proof}
The capital ratio constraint, given by second constraint in Model \ref{Model3.2}, is given by,
\begin{eqnarray*}
&&\frac{Z_{1}+(1-\phi_{d})v_{d}-v_{d}-\beta_{LT}(1-e)(1+r_{d})}{Z_{1}+v_{d}}\geq k_{lev} \\
&\Rightarrow& Z_{1}+(1-\phi_{d})v_{d}-v_{d}-\beta_{LT}(1-e)(1+r_{d}) \geq k_{lev} Z_{1}+k_{lev} v_{d} \\
&\Rightarrow& Z_{1}-k_{lev} Z_{1}-\beta_{LT}(1-e)(1+r_{d}) \geq v_{d}(k_{lev}+\phi_{d}) \\
&\Rightarrow& v_{d} \leq \frac{Z_1-k_{lev} Z_{1}-\beta_{LT}(1-e)(1+r_{d})}{k_{lev}+\phi_{d}}.
\end{eqnarray*}
The above inequality defines a cap through capital requirement condition. It can be easily seen that this upper bound increases as $k_{lev}$ and $\phi_{d}$ decrease, that is, for higher value of Leverage Ratio, the restriction decreases the cap on the debt that can be issued. For higher values of $Z_{1}$, that is, the realization of the portfolio acquired at time $t=0$ allows more debt to be issued.
\end{proof}
Therefore, from the above two results, we conclude that the model allows the bank to issue debt that satisfy both the equity holders constraint and capital requirement condition. In the next case we show the effect when the firm issue new equity only. Now, if the bank issues new equity, then the expected discounted return per unit share is,
\begin{eqnarray*}
V_{\text{equity}}(v_{e},0)&=&\frac{\frac{1}{1+r}\left[\mathbb{E}_{Q} \left[(Z_{1}+(1-\phi_{e})v_{e})(1+r)-(1+r_{d})^{2}\beta_{LT}(1-e)\right] \right]}{e+v_{e}} \\
&=&\frac{Z_{1}-\frac{(1+r_{d})^{2}}{1+r}\beta_{LT}(1-e)+(1-\phi_{e})v_{e}}{e+v_{e}}.
\end{eqnarray*}
This definition leads us to study the change of the function, $f(v_{e})$. The following theorem discusses this relation mathematically. Let $\displaystyle{Z_{1LT}:=Z_{1}-\frac{(1+r_{d})^{2}}{1+r}\beta_{LT}(1-e)}$, that is, the discounted return on equity at time $t=1$.
\begin{theorem}
If $Z_{1LT} \geq (1-\phi_{e})e$, then $V_{\text{equity}}(v_{e},0)$ decreases, as $v_{e}$ increases.                                                                                                                                                            
\end{theorem}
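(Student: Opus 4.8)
The plan is to treat $V_{\text{equity}}(v_{e},0)$ as an explicit one–variable function of $v_{e}$ and examine the sign of its derivative. Recalling the closed form derived just above the statement,
\[
V_{\text{equity}}(v_{e},0)=\frac{Z_{1LT}+(1-\phi_{e})v_{e}}{e+v_{e}},
\qquad Z_{1LT}:=Z_{1}-\frac{(1+r_{d})^{2}}{1+r}\beta_{LT}(1-e),
\]
I would first note that the denominator $e+v_{e}$ is strictly positive on the admissible range: $e\geq\max\{k_{lev},K(x)\}>0$ from constraint~3 of Model~\ref{Model3.1} and $v_{e}\geq 0$ from constraint~5 of Model~\ref{Model3.2}, so the quotient is differentiable in $v_{e}$ throughout.

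Next I would apply the quotient rule. Writing $N(v_{e})=Z_{1LT}+(1-\phi_{e})v_{e}$ and $Dn(v_{e})=e+v_{e}$, we get
\[
\frac{\partial}{\partial v_{e}}V_{\text{equity}}(v_{e},0)
=\frac{(1-\phi_{e})(e+v_{e})-\bigl(Z_{1LT}+(1-\phi_{e})v_{e}\bigr)}{(e+v_{e})^{2}}
=\frac{(1-\phi_{e})e-Z_{1LT}}{(e+v_{e})^{2}}.
\]
The key algebraic observation is that the terms carrying $v_{e}$ in the numerator cancel, leaving a numerator that does not depend on $v_{e}$ at all; hence the sign of the derivative is constant in $v_{e}$ and is governed entirely by the sign of $(1-\phi_{e})e-Z_{1LT}$.

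Finally I would invoke the hypothesis $Z_{1LT}\geq(1-\phi_{e})e$, which makes the numerator $(1-\phi_{e})e-Z_{1LT}\leq 0$, so $\tfrac{\partial}{\partial v_{e}}V_{\text{equity}}(v_{e},0)\leq 0$ for every $v_{e}\geq 0$, i.e. $V_{\text{equity}}(v_{e},0)$ is non-increasing in $v_{e}$, as claimed. There is no real obstacle here beyond bookkeeping: the only points requiring a word of care are (i) confirming $e+v_{e}>0$ so the derivative is well defined, and (ii) making the cancellation in the numerator explicit so the reader sees that the monotonicity is global rather than local. If one prefers an calculus-free argument, the same conclusion follows by writing $V_{\text{equity}}(v_{e},0)=(1-\phi_{e})+\dfrac{Z_{1LT}-(1-\phi_{e})e}{e+v_{e}}$ and observing that the fraction is non-positive with denominator increasing in $v_{e}$; I would mention this as a remark but carry out the derivative computation as the main line.
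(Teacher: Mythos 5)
Your proposal is correct and follows essentially the same route as the paper: write $V_{\text{equity}}(v_{e},0)=\frac{Z_{1LT}+(1-\phi_{e})v_{e}}{e+v_{e}}$, differentiate by the quotient rule to get $\frac{(1-\phi_{e})e-Z_{1LT}}{(e+v_{e})^{2}}$, and conclude nonpositivity from the hypothesis. If anything, your version is slightly cleaner, since you invoke the stated hypothesis $Z_{1LT}\geq(1-\phi_{e})e$ directly, whereas the paper's closing line appeals to the stronger and unneeded condition $Z_{1LT}\geq e$.
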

\begin{proof}	
We can define the following function, in terms of $Z_{1LT}$.
\[V_{\text{equity}}(v_{e},0)=\frac{Z_{1LT}+(1-\phi_{e})v_{e}}{e+v_{e}}.\]
Differentiating w.r.t., $v_{e}$, we have, 
\[\frac{\partial}{\partial v_{e}}V_{\text{equity}}(v_{e},0)=\frac{(1-\phi_{e})e-Z_{1LT}}{(e+v_{e})^{2}} \leq 0.\]
Since, $Z_{1LT} \geq e$, therefore, $Z_{1LT} \geq (1-\phi_{e})e$.	
\end{proof}
This theorem implies that in favorable scenarios, raising new equity decreases the expected discounted return on equity. 
\begin{corollary}
\label{Corollary:2}
In some adverse scenarios, the reverse can happen, that is, if $Z_{1LT} \leq (1-\phi_{e})e$ then $V_{\text{equity}}(v_{e},0)$ increases with an increase in $v_{e}$.
\end{corollary}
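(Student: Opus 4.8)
The plan is to reuse the derivative computation from the preceding theorem verbatim and simply flip the sign of the governing inequality. Recall that, in terms of $Z_{1LT}$, the expected discounted return per unit share when only equity is raised is
\[V_{\text{equity}}(v_{e},0)=\frac{Z_{1LT}+(1-\phi_{e})v_{e}}{e+v_{e}},\]
so differentiating with respect to $v_{e}$ by the quotient rule gives
\[\frac{\partial}{\partial v_{e}}V_{\text{equity}}(v_{e},0)=\frac{(1-\phi_{e})e-Z_{1LT}}{(e+v_{e})^{2}}.\]
The denominator $(e+v_{e})^{2}$ is strictly positive (as $e>0$ and $v_{e}\geq 0$), so the sign of the derivative is dictated entirely by the numerator $(1-\phi_{e})e-Z_{1LT}$.

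First I would observe that the hypothesis $Z_{1LT}\leq (1-\phi_{e})e$ is precisely the statement that this numerator is non-negative, whence $\frac{\partial}{\partial v_{e}}V_{\text{equity}}(v_{e},0)\geq 0$ for every admissible $v_{e}\in[0,E]$, and therefore $V_{\text{equity}}(\cdot,0)$ is non-decreasing on that interval; if the inequality is strict, the derivative is strictly positive and the function is strictly increasing. Then I would add an interpretive sentence tying the algebraic condition to the ``adverse scenario'' wording: since $Z_{1LT}=Z_{1}-\frac{(1+r_{d})^{2}}{1+r}\beta_{LT}(1-e)$ is the discounted net worth available to shareholders at $t=1$, a small value of $Z_{1LT}$ (below the threshold $(1-\phi_{e})e$) corresponds exactly to the low-realization nodes of Figure \ref{Fig:Three_One}, where the legacy portfolio has done poorly; injecting fresh equity then spreads the embedded loss across a larger equity base, so the per-share discounted payoff rises.

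There is essentially no hard step here, as the statement is the complement of the preceding theorem and reduces to a one-line sign check on the same derivative. The only point deserving a little care is the boundary case $Z_{1LT}=(1-\phi_{e})e$, where the derivative vanishes identically and $V_{\text{equity}}(\cdot,0)$ is constant rather than strictly increasing; I would either phrase the conclusion as ``non-decreasing'' or impose the strict inequality $Z_{1LT}<(1-\phi_{e})e$ to obtain strict monotonicity. I would also remark in passing that $\phi_{e}<1$ (issuing equity is not totally costly) ensures $(1-\phi_{e})e>0$, so the regime described by the corollary is non-empty.
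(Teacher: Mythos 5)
Your proposal is correct and matches the paper's (implicit) argument: the corollary is left unproved in the text precisely because it is the sign-flip of the derivative $\frac{\partial}{\partial v_{e}}V_{\text{equity}}(v_{e},0)=\frac{(1-\phi_{e})e-Z_{1LT}}{(e+v_{e})^{2}}$ computed in the preceding theorem, which is exactly the one-line check you perform. Your remarks on the boundary case $Z_{1LT}=(1-\phi_{e})e$ and on $\phi_{e}<1$ are sensible refinements but not needed for the claim.
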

Issuing new equity reduces the return on equity. Hence from the equity holders constraint, we get an upper bound for the amount of equity we can issue.
\begin{theorem}
From equity holder's constraint we get an upper bound on the amount of equity issued. Equity holder’s constraint is more than the minimum value of the current return or the target return.
\end{theorem}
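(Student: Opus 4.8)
The plan is to substitute the closed form of $V_{\text{equity}}(v_{e},0)$ obtained immediately above the statement into the third (``equity holder's'') constraint of Model~\ref{Model3.2} with $v_{d}=0$, and then to solve the resulting inequality for $v_{e}$. Writing $\pi:=\big(\sum_{i=0}^{2}x_{i}L_{i}^{(1)}-\text{Debt}\big)^{+}/e$ for the current return on equity and $m:=\min\{1+r_{e},\pi\}$, that constraint is precisely
\[V_{\text{equity}}(v_{e},0)=\frac{Z_{1LT}+(1-\phi_{e})v_{e}}{e+v_{e}}\;\geq\;m,\]
which is already the second assertion of the theorem: what the equity holders receive is at least the smaller of the target return $1+r_{e}$ and the current return. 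I would run the argument in the regime $Z_{1LT}\geq(1-\phi_{e})e$, where by the previous theorem $V_{\text{equity}}(\cdot,0)$ decreases monotonically from $Z_{1LT}/e$ at $v_{e}=0$ down towards the limit $1-\phi_{e}$, so that the constraint $V_{\text{equity}}(v_{e},0)\geq m$ genuinely cuts off large values of $v_{e}$; this parallels the corollary that follows the $\phi_{d}\geq r/(1+r)$ theorem for the debt variable.

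I would then split according to which argument attains the minimum. If $\pi>1+r_{e}$, then $m=1+r_{e}$; multiplying by $e+v_{e}>0$ and collecting the $v_{e}$ terms gives $Z_{1LT}-(1+r_{e})e\geq(r_{e}+\phi_{e})v_{e}$, and since $r_{e}+\phi_{e}\geq0$ this is the explicit ceiling
\[v_{e}\;\leq\;\frac{Z_{1LT}-(1+r_{e})e}{r_{e}+\phi_{e}}.\]
If instead $\pi\leq 1+r_{e}$, then $m=\pi$ and the same cross-multiplication yields $Z_{1LT}-\pi e\geq\big(\pi-(1-\phi_{e})\big)v_{e}$, whence
\[v_{e}\;\leq\;\frac{Z_{1LT}-\pi e}{\pi-1+\phi_{e}}.\]
Intersecting these with the exogenous cap of constraint~5, i.e.\ $v_{e}\leq E$, gives the operative upper bound on the equity the bank may issue, which is the first assertion.

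The step I expect to be the genuine obstacle is controlling the \emph{sign} of the coefficient multiplying $v_{e}$ at the instant one isolates it --- equivalently, checking that $m$ lies strictly inside the range $(1-\phi_{e},\,Z_{1LT}/e]$ swept out by $V_{\text{equity}}(\cdot,0)$ --- so that the final inequality points the right way. In the first case the sign of $r_{e}+\phi_{e}$ is harmlessly non-negative, but one still needs $Z_{1LT}\geq(1+r_{e})e$ for the bound to be non-negative; if that fails, the constraint is already violated at $v_{e}=0$, the equity holders liquidate, and the issuance question does not arise. In the second case one must verify $\pi>1-\phi_{e}$, and this is where I would lean on the standing hypothesis $Z_{1LT}\geq(1-\phi_{e})e$ together with the case hypothesis $\pi\leq 1+r_{e}$; it also helps to record that $Z_{1LT}>\pi e$ because $Z_{1LT}$ and $\pi e$ differ by the positive amount $\beta_{LT}(1-e)(1+r_{d})(r-r_{d})/(1+r)$, which makes the numerators above positive. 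I would flag the boundary sub-cases in which the coefficient vanishes (no restriction on $v_{e}$ beyond $E$), and note that in the complementary regime $Z_{1LT}\leq(1-\phi_{e})e$ of Corollary~\ref{Corollary:2}, where $V_{\text{equity}}(\cdot,0)$ is increasing, the equity holder's constraint is slack for every $v_{e}\geq0$ and the only ceiling is $E$.
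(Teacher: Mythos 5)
Your proposal follows essentially the same route as the paper: split on which term attains the minimum in the equity holder's constraint, cross-multiply by $e+v_{e}>0$, and isolate $v_{e}$; your two ceilings are algebraically identical to the paper's once one substitutes $\pi e = Z_{1}-\beta_{LT}(1-e)(1+r_{d})$ and $Z_{1LT}-\pi e=\beta_{LT}(1-e)(1+r_{d})\frac{r-r_{d}}{1+r}$. The only difference is that you explicitly flag the positivity of the coefficient $\pi-(1-\phi_{e})$ (equivalently of the denominator $Z_{1}-\beta_{LT}(1-e)(1+r_{d})-(1-\phi_{e})e$), which the paper divides by without comment, so your version is if anything slightly more careful.
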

\begin{proof}
From the equity holder's constraint, we can say that the return on equity should be more than the present value. 
\begin{enumerate}[(A)]
\item When the current return on equity is less than the target return, then,
\begin{eqnarray*}
&&\frac{Z_{1}+(1-\phi_{e})v_{e}- \beta_{LT}(1-e)\frac{(1+r_{d})^{2}}{1+r}}{e+v_{e}} \geq \frac{ Z_{1}-\beta_{LT}(1-e)(1+r_{d})}{e} \\
&\Rightarrow&eZ_{1}+e(1-\phi_{e})v_{e}-e\beta_{LT}(1-e)\frac{(1+r_{d})^{2}}{1+r} \\
&\geq& eZ_{1}-e\beta_{LT}(1-e)(1+r_{d})+v_{e}Z_{1}-v_{e}\beta_{LT}(1-e)(1+r_{d})\\
&\Rightarrow& e\beta_{LT}(1-e)(1+r_{d})-e\beta_{LT}(1-e)\frac{(1+r_{d})^{2}}{1+r} \geq v_{e} \left(Z_{1}-\beta_{LT}(1-e)(1+r_{d})-e(1-\phi_{e})\right) \\
&\Rightarrow & v_{e} \leq \frac{\beta_{LT}e(1-e)(1+r_{d})\frac{r-r_{d}}{1+r}}{Z_{1}-\beta_{LT}(1-e)(1+r_{d})-(1-\phi_{e})e}.
\end{eqnarray*}
\item When the current return crosses the target return, then,
\begin{eqnarray*}
&&\frac{Z_{1}+(1-\phi_{e})v_{e}-\beta_{LT}(1-e)\frac{(1+r_{d})^{2}}{1+r}}{e+v_{e}} \geq (1+r_{e}) \\
&\Rightarrow& Z_{1}+(1-\phi_{e})v_{e}-\beta_{LT}(1-e)\frac{(1+r_{d})^{2}}{1+r}\geq (1+r_{e})e+(1+r_{e})v_{e} \\
&\Rightarrow& Z_{1}-\beta_{LT}(1-e)\frac{(1+r_{d})^{2}}{1+r}-(1+r_{e})e \geq v_{e}(r_{e}+\phi_{e}) \\
&\Rightarrow& v_{e} \leq \frac{Z_{1}-\beta_{LT}(1-e)\frac{(1+r_{d})^{2}}{1+r}-(1+r_{e})e}{(r_{e}+\phi_{e})}.
\end{eqnarray*}
\end{enumerate}
The above inequalities give the upper bound on the amount of capital that can be raised via equity.
\end{proof}
For lower issuing cost ($\phi_{e}$), issuing new equity raises the capital ratio.
\begin{theorem}
\label{Theorem:3.10}
If $\displaystyle{\phi_{e} \leq \frac{\beta_{LT}(1-e)(1+r_{d})}{Z_{1}}}$, then the Leverage Ratio increases, with issuing new equity.
\end{theorem}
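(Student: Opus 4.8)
The plan is to mirror the differentiation argument already used for the debt-only case. First I would write the Leverage Ratio at $t=1$ as an explicit function of $v_{e}$ in the regime where no new debt is raised ($v_{d}=0$). Arguing as in the debt case --- Tier~1 capital is the post-short-term-debt realization, plus the \emph{net} proceeds of the new issuance, minus the long-term liability, while the total exposure measure is the realization plus the \emph{gross} amount raised --- this gives
\[
\text{Leverage Ratio} = \frac{Z_{1} + (1-\phi_{e})v_{e} - \beta_{LT}(1-e)(1+r_{d})}{Z_{1} + v_{e}}.
\]
The delicate point, and essentially the only substantive step, is the asymmetry between numerator and denominator: only $(1-\phi_{e})v_{e}$ of the raised capital augments Tier~1 capital (the fraction $\phi_{e}$ being consumed as issuance cost), whereas the exposure measure picks up the full $v_{e}$, exactly as the denominator in the debt theorem was $Z_{1}+v_{d}$ rather than $Z_{1}+(1-\phi_{d})v_{d}$. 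If instead the exposure measure absorbed only $(1-\phi_{e})v_{e}$, the ratio would be increasing in $v_{e}$ unconditionally and the hypothesis on $\phi_{e}$ would be vacuous, so getting this bookkeeping right is precisely what the stated threshold is tracking.

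Next I would differentiate with respect to $v_{e}$ via the quotient rule. The numerator of the derivative is
\[
(1-\phi_{e})(Z_{1}+v_{e}) - \bigl(Z_{1} + (1-\phi_{e})v_{e} - \beta_{LT}(1-e)(1+r_{d})\bigr),
\]
in which the $v_{e}$-terms cancel, leaving $\beta_{LT}(1-e)(1+r_{d}) - \phi_{e}Z_{1}$. Since the denominator $(Z_{1}+v_{e})^{2}$ is strictly positive (and $Z_{1}>0$ is implicit in the hypothesis, being a surviving-bank assumption consistent with the earlier results), the sign of $\partial(\text{Leverage Ratio})/\partial v_{e}$ coincides with the sign of $\beta_{LT}(1-e)(1+r_{d}) - \phi_{e}Z_{1}$.

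Finally I would invoke the hypothesis: $\phi_{e} \le \beta_{LT}(1-e)(1+r_{d})/Z_{1}$ is equivalent to $\phi_{e}Z_{1} \le \beta_{LT}(1-e)(1+r_{d})$, which renders the derivative nonnegative --- strictly positive when the inequality is strict --- so the Leverage Ratio is nondecreasing in $v_{e}$, i.e.\ issuing new equity does not lower (and generically raises) the Leverage Ratio. The differentiation and sign analysis are routine; the main obstacle is pinning down the correct expression for the ratio in the first step.
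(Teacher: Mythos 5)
Your proposal is correct and follows essentially the same route as the paper: the same expression for the Leverage Ratio after issuing $v_{e}$, the same quotient-rule differentiation yielding $\bigl(\beta_{LT}(1-e)(1+r_{d})-\phi_{e}Z_{1}\bigr)/(Z_{1}+v_{e})^{2}$, and the same sign analysis under the stated hypothesis. Your added commentary on the net-versus-gross bookkeeping in numerator and denominator is a sensible justification of the formula the paper simply asserts, but it does not change the argument.
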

\begin{proof}
After issuing $v_{e}$ amount of capital, the Leverage Ratio is given by,
\[\text{Leverage Ratio}:=\frac{Z_{1}+(1-\phi_{e})v_{e}-\beta_{LT}(1-e)(1+r_{d})}{Z_{1}+v_{e}}.\]
Now differentiating it w.r.t. $v_{e}$, we get,
\[\frac{\partial}{\partial v_{e}} \left(\text{Capital Ratio}\right)=\frac{\beta_{LT}(1-e)(1+r_{d})- Z_{1} \phi_{e}}{(Z_{1}+v_{e})^{2}}.\]
Clearly if the given condition is satisfied then capital ratio is an increasing function of $v_{e}$.
\end{proof}
Clearly if bank has less debt and the issuing cost is high, then it not necessary to collect fund through equity. The payment to issue new equity then becomes a new burden. The changing pattern of the Leverage Ratio with $v_{e}$ and $v_{d}$ is presented in Figure \ref{Fig:Three_Caprat}.
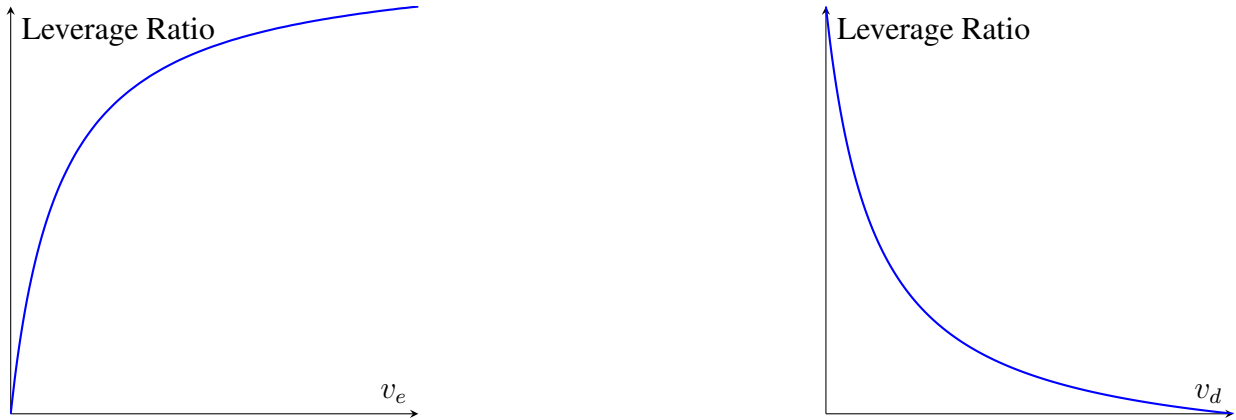
\begin{figure}[ht]
\vspace{0.5cm}
\begin{center}   
\begin{tikzpicture}
\begin{axis}[
width=7cm, height=7cm,
at={(-500,0)}, % Position of the first plot
xlabel={$v_{e}$},
ylabel={Leverage Ratio},
domain=0:0.5, % Adjust domain as necessary
samples=100,
axis lines=middle, % Adds arrows to the axes
xtick=\empty, % Removes x-axis ticks
ytick=\empty,  % Removes y-axis ticks
smooth
]
\addplot[
thick,
blue,
]
{(0.05 + 0.9*x)/(0.06 + x)};
\end{axis}
		
\begin{axis}[
width=7cm, height=7cm,
at={(500,0)}, % Position of the first plot
xlabel={$v_{d}$},
ylabel={Leverage Ratio},
domain=0:0.5, % Adjust domain as necessary
samples=100,
axis lines=middle, % Adds arrows to the axes
xtick=\empty, % Removes x-axis ticks
ytick=\empty,  % Removes y-axis ticks
smooth
]
\addplot[
thick,
blue,
]
{(0.05 - 0.05*x)/(0.06 + x)};
\end{axis}
\end{tikzpicture}
\end{center}
\caption[Capital Ratio with $v_{e}$]{This figure shows the change of capital ratio with issuing new equity and debt in the intermediate time step.}
\label{Fig:Three_Caprat}
\end{figure}
The above results discuss the cap on the capital raised through a single source, namely, either debt or equity. All the cap are in terms of the variable introduced initially and the realization of the portfolio implemented at the initial stage. These caps do not contain any decision variable of the intermediate time step ($x^{(1)},v_{d},v_{e}$). Now, we proceed to the case where banks issue funds through both the sources, debt and equity. Here, we also find that the caps, for the variable $v_{e}$ and $v_{d}$, are dependent on the decision variables of the intermediate time step. Hence,considering all the advantages, the bank must satisfy all the constraints defined in the intermediate time step.
\begin{theorem}
\label{Theorem:3.11}
Leverage Ratio restriction gives an upper bound on the amount of debt that can be issued, which is dependent on the amount of equity which has been raised.
\end{theorem}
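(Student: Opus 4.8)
The plan is to mirror the two preceding leverage-ratio theorems, but now keeping both $v_{e}$ and $v_{d}$ active simultaneously. First I would write down the Leverage Ratio at $t=1$ when the bank has raised $v_{e}$ of equity and $v_{d}$ of debt, following the bookkeeping conventions already fixed in this section. Combining the debt-only accounting (the numerator loses $v_{d}$ to the new liability and $\phi_{d}v_{d}$ to the issuance cost, the denominator gains $v_{d}$) with the equity-only accounting (the numerator gains $(1-\phi_{e})v_{e}$, the denominator gains $v_{e}$), the ratio becomes
\[
\text{Leverage Ratio}=\frac{Z_{1}+(1-\phi_{e})v_{e}+(1-\phi_{d})v_{d}-v_{d}-\beta_{LT}(1-e)(1+r_{d})}{Z_{1}+v_{e}+v_{d}}.
\]

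Next I would impose the second constraint of Model \ref{Model3.2}, namely $\text{Leverage Ratio}\ge k_{lev}$, clear the positive denominator $Z_{1}+v_{e}+v_{d}$, and collect all the terms containing $v_{d}$ on one side. The coefficient of $v_{d}$ that results is $\phi_{d}+k_{lev}$, which is strictly positive, so dividing through preserves the inequality and yields
\[
v_{d}\le \frac{(1-k_{lev})Z_{1}+(1-\phi_{e}-k_{lev})v_{e}-\beta_{LT}(1-e)(1+r_{d})}{\phi_{d}+k_{lev}},
\]
which is the asserted cap and, crucially, depends on $v_{e}$ through the term $(1-\phi_{e}-k_{lev})v_{e}$ in the numerator.

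Finally I would record the qualitative reading of this bound, paralleling the remark after the debt-only leverage theorem: the cap tightens as $k_{lev}$ and $\phi_{d}$ increase, loosens for larger realized $Z_{1}$, and---this being the new point here---moves with $v_{e}$ in a direction governed by the sign of $1-\phi_{e}-k_{lev}$, so that raising equity enlarges the admissible amount of debt precisely when $\phi_{e}+k_{lev}<1$. I would also note, using the survival inequality $Z_{1}\ge \beta_{LT}(1-e)(1+r_{d})$ already invoked in the earlier leverage proof together with $k_{lev}<1$, that the right-hand side is nonnegative, so the constraint is genuinely restrictive rather than vacuous.

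There is essentially no deep obstacle here: the content is the same linear-fractional manipulation as in the single-source theorems. The only step that needs care is the first one---writing the combined Leverage Ratio consistently with the conventions used earlier in the section, since any inconsistent treatment of the issuance costs in the numerator versus the denominator would propagate into the final cap. Once that expression is pinned down, the remainder is routine algebra plus the sign check on $\phi_{d}+k_{lev}$.
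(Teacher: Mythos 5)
Your derivation is correct and is essentially identical to the paper's proof: both write the combined Leverage Ratio with numerator $Z_{1}+(1-\phi_{e})v_{e}+(1-\phi_{d})v_{d}-v_{d}-\beta_{LT}(1-e)(1+r_{d})$ and denominator $Z_{1}+v_{e}+v_{d}$, clear the denominator against $k_{lev}$, and isolate $v_{d}$ with coefficient $k_{lev}+\phi_{d}$ to obtain the same $v_{e}$-dependent cap. The only caveat is your closing non-vacuousness remark: $Z_{1}\geq\beta_{LT}(1-e)(1+r_{d})$ together with $k_{lev}<1$ gives $(1-k_{lev})Z_{1}-\beta_{LT}(1-e)(1+r_{d})\geq -k_{lev}\beta_{LT}(1-e)(1+r_{d})$, which need not be nonnegative, but this aside is not needed for the theorem.
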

\begin{proof}
From the Leverage Ratio requirement, we get, 
\[\frac{Z_{1}+(1-\phi_{d})v_{d}+(1-\phi_{e})v_{e}-v_{d}-\beta_{LT}(1-e)(1+r_{d})}{Z_{1}+v_{d}+v{e}} \geq k_{lev},\]
which implies that,
\begin{eqnarray*}
&&Z_{1}+(1-\phi_{d})v_{d}+(1-\phi_{e})v_{e}-v_{d}-\beta_{LT}(1-e)(1+r_{d}) \geq k_{lev} Z_{1}+k_{lev} v_{d} + k_{lev}v_{e} \\
&\Rightarrow& Z_{1}-k_{lev} Z_{1}-\beta_{LT}(1-e)(1+r_{d})+(1-\phi_{e})v_{e}-k_{lev}v_{e} \geq v_{d}(k_{lev}+\phi_{d}) \\ 
&\Rightarrow& v_{d} \leq \frac{Z_1-k_{lev} Z_{1}-\beta_{LT}(1-e)(1+r_{d})+(1-\phi_{e}-k_{lev})v_{e} } {k_{lev}+\phi_{d}} \\
\end{eqnarray*}
The above inequality imposes an upper bound  on $v_{d}$ that can be issued when $v_{e}$ amount of equity has been raised.
\end{proof}
\begin{theorem}
\label{Theorem:3.12}
Equity holder's constraint gives an upper bound on the amount of equity that can be raised. This upper bound is dependent of the amount of debt being issued.
\end{theorem}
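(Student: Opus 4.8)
The plan is to follow the same route used for the equity-only upper bound earlier in this section and for Theorem~\ref{Theorem:3.11}, but now treating the debt raised $v_{d}$ as a free parameter rather than setting it to zero. First I would write out the third constraint of Model~\ref{Model3.2} and invoke Assumption~\ref{Assumption 1} to replace the expectation of the positive part by the plain expectation, so that the left-hand side of the constraint collapses to the closed form
\[V_{\text{equity}}(v_{e},v_{d})=\frac{Z_{1}+(1-\phi_{e})v_{e}+(1-\phi_{d})v_{d}-\frac{v_{d}}{1+r}-\beta_{LT}(1-e)\frac{(1+r_{d})^{2}}{1+r}}{e+v_{e}},\]
while the right-hand side is $\min\left\{1+r_{e},\ \frac{Z_{1}-\beta_{LT}(1-e)(1+r_{d})}{e}\right\}$, using $\text{Debt}=\beta_{ST}(1-e)+\beta_{LT}(1-e)(1+r_{d})$ and $Z_{1}=R_{1}-\beta_{ST}(1-e)$. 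This reduces the whole statement to solving a single linear inequality in $v_{e}$ in which $v_{d}$ appears only as a coefficient, which is exactly the shape of the earlier argument.

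Next I would split into the two cases according to which term attains the minimum on the right. In Case~(A), where the current return on equity does not exceed the target $1+r_{e}$, the constraint reads $V_{\text{equity}}(v_{e},v_{d})\geq\frac{Z_{1}-\beta_{LT}(1-e)(1+r_{d})}{e}$; cross-multiplying by $e(e+v_{e})>0$, cancelling the $eZ_{1}$ terms, and collecting the $v_{e}$ terms on the left would give
\[v_{e}\left[Z_{1}-\beta_{LT}(1-e)(1+r_{d})-(1-\phi_{e})e\right]\leq e\,v_{d}\left(\tfrac{r}{1+r}-\phi_{d}\right)+e\,\beta_{LT}(1-e)(1+r_{d})\,\tfrac{r-r_{d}}{1+r}.\]
Provided the bracket on the left is positive --- the same ``favourable scenario'' hypothesis $Z_{1}-\beta_{LT}(1-e)(1+r_{d})>(1-\phi_{e})e$ that appeared in the equity-only theorem --- we may divide by it to obtain the advertised cap on $v_{e}$, whose numerator is affine in $v_{d}$. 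In Case~(B), where the current return exceeds the target, the constraint is $V_{\text{equity}}(v_{e},v_{d})\geq 1+r_{e}$; cross-multiplying by $e+v_{e}>0$ and isolating $v_{e}$ would give
\[v_{e}(r_{e}+\phi_{e})\leq Z_{1}+v_{d}\left(\tfrac{r}{1+r}-\phi_{d}\right)-\beta_{LT}(1-e)\tfrac{(1+r_{d})^{2}}{1+r}-(1+r_{e})e,\]
and since $r_{e}+\phi_{e}>0$ this divides cleanly, again producing an upper bound on $v_{e}$ that depends affinely on $v_{d}$.

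The routine part is just the cross-multiplication and regrouping; the one step that genuinely needs care --- the main obstacle --- is controlling the sign of the coefficient of $v_{e}$ in Case~(A). That coefficient is positive only when the post-short-term realization $Z_{1}$ is large enough relative to the discounted long-term liability and the cost-adjusted equity base; in the complementary adverse regime the same manipulation yields a \emph{lower} bound on $v_{e}$ instead (in the spirit of Corollary~\ref{Corollary:2}), which is why the theorem is phrased for the regime in which an upper bound exists. I would close by noting that, read together with Theorem~\ref{Theorem:3.11}, the two caps are genuinely coupled --- the admissible pair $(v_{e},v_{d})$ must satisfy both at once --- so the feasible set of capital-raising decisions at $t=1$ is the polytope carved out by the leverage constraint, the equity-holder constraint, and the boxes $0\leq v_{e}\leq E$, $0\leq v_{d}\leq D$.
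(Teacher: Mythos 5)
Your proposal follows essentially the same route as the paper's proof: substitute the closed form of $V_{\text{equity}}(v_{e},v_{d})$ into the equity holder's constraint, split into the two cases according to which term attains the minimum, and solve the resulting linear inequality for $v_{e}$, arriving at caps whose numerators are affine in $v_{d}$ (your expressions coincide with the paper's after noting $1-\phi_{d}-\frac{1}{1+r}=\frac{r}{1+r}-\phi_{d}$). The one point where you go beyond the paper is in flagging that the division in Case (A) is only legitimate when $Z_{1}-\beta_{LT}(1-e)(1+r_{d})-(1-\phi_{e})e>0$; the paper divides without checking this sign, so your remark actually tightens the published argument rather than departing from it.
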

\begin{proof} The facility of raising capital through debt and equity simultaneously, affect each other. The upper bound on the amount of equity that can be raised is dependent on the amount of debt issued.
\begin{enumerate}[(A)]
\item When the current return on equity is less than the target return, then we have,
\begin{eqnarray*}
&&\frac{Z_{1}+(1-\phi_{e})v_{e}+(1-\phi_{d})v_{d}-\frac{v_{d}+\beta_{LT}(1-e)(1+r_{d})^{2}}{1+r}}{e+v_{e}} \geq \frac{ Z_{1}-\beta_{LT}(1-e)(1+r_{d})}{e} \\
&\Rightarrow& eZ_{1}+e(1-\phi_{e})v_{e}+e(1-\phi_{d})v_{d}-\frac{ev_{d}}{1+r}-e\beta_{LT}(1-e) \frac{(1+r_{d})^{2}}{1+r} \\
&\geq& eZ_{1}-e\beta_{LT}(1-e)(1+r_{d})+v_{e}Z_{1}-v_{e}\beta_{LT}(1-e)(1+r_{d}) \\ 
&\Rightarrow& ev_{d}\left(1-\phi_{d}-\frac{1}{1+r}\right)+e\beta_{LT}(1-e)(1+r_{d})-e\beta_{LT}(1-e)\frac{(1+r_{d})^{2}}{1+r} \\
&\geq& v_{e}\left(Z_{1}-\beta_{LT}(1-e)(1+r_{d})-e(1-\phi_{e})\right) \\
&\Rightarrow& v_{e} \leq \frac{ev_{d}\left(1-\phi_{d}-\frac{1}{1+r}\right)+\beta_{LT}e(1-e)(1+r_{d})\frac{r-r_{d}}{1+r}}{Z_{1}-\beta_{LT}(1-e)(1+r_{d})-e(1-\phi_{e})}.
\end{eqnarray*}
\item When the current return crosses the target return, we obtain,
\begin{eqnarray*}
&&\frac{Z_{1}+(1-\phi_{e})v_{e}+(1-\phi_{d})v_{d}-\frac{v_{d}+\beta_{LT}(1-e)(1+r_{d})^{2}}{1+r}}{e+v_{e}} \geq \left(1+r_{e}\right) \\
&\Rightarrow & Z_{1}+\left(1-\phi_{e}\right)v_{e}+\left(1-\phi_{d}-\frac{1}{1+r}\right)v_{d}-\beta_{LT}(1-e) \frac{(1+r_{d})^{2}}{1+r} \geq (1+r_{e})(e+v_{e}) \\
&\Rightarrow & v_{e}(\phi_{e}+r_{e})\leq Z_{1}+\left(\frac{r}{1+r}-\phi_{d}\right)v_{d}-\beta_{LT}(1-e)\frac{(1+r_{d})^{2}}{1+r}-(1+r_{e})e \\
&\Rightarrow & v_{e} \leq \frac{Z_{1}+\left(\frac{r}{1+r}-\phi_{d}\right)v_{d}-\beta_{LT}(1-e)\frac{(1+r_{d})^{2}}{1+r}-(1+r_{e})e}{\phi_{e}+r_{e}}.
\end{eqnarray*}
\end{enumerate}
This inequalities create a cap on $v_{e}$ when $v_{d}$ is already issued, and the cap increases with $v_{d}$.
\end{proof}
From the above inequality, it is clear that in favorable condition, the value of the cap gets reduced, due to a higher value of $Z_{1}$.
Finally, the results regarding the issuance of debt and equity are summarized in Table \ref{Table:Three_One}.
\begin{table}[ht]
\caption{Table to summarize the results. $^{*}$ Decreases provided $Z_{1LT}\ge (1-\phi_{e})e$. $^{**}$ Increases provided $Z_{1LT} < (1-\phi_{e})e$. $^{\dag}$ No, provided $\displaystyle{\phi_{d}\le \frac{r}{1+r}}$, $^{\dag\dag}$ Yes, provided $\displaystyle{\phi_{d}> \frac{r}{1+r}}$.}
\label{Table:Three_One}
\begin{center}
\begin{tabular}{|c|c|c|c|c|}
\hline
& Change in &  Cap due to & Change in  & Cap due to `Equity\\
&Capital Ratio	& Capital Ratio & $V_{equity}$	& holder's constraint'. \\
\hline
Only, & Decreases &Yes & Increases & No $^{\dag}$  \\
$v_{d}$	& &&& Yes $^{\dag\dag}$ \\
\hline
Only, & Increases & No & Decreases $^{*}$ & Yes \\
$v_{e}$	& & & Increases $^{**}$ &\\
\hline
Both $v_{e}$, & Simultaneous  & Cap on $v_{d}$ & Simultaneous & Cap on $v_{e}$ \\
$v_{d}$	& effect & dependent on $v_{e}$ & effect & dependent on $v_{d}$\\
\hline
\end{tabular}
\end{center}
\end{table}

\section{An Example}
\label{Sec_Three_An_Example}

In this Section, we present an example to validate the preceding results. In particular, we consider two examples, one with parameter values of $\beta_{ST}=0.7,~D=0.01,~E=0.02$ and the other with the parameter values $\beta_{ST}=0.5,~D=0.20,~E= 0.30$. The remaining parameters which are identical for both the examples, are given in Tables \ref{Tab_Three_Loan_Parameters} and \ref{Tab_Three_Other_Parameters}.
\begin{table}[ht]
\centering
\caption{Parameters pertaining to the three loans, $L_{0}$, $L_{1}$ and $L_{2}$}
\label{Tab_Three_Loan_Parameters}
\begin{tabular}{|c||c|c|c|}
\hline
& Interest & PD & LGD \\
\hline
\hline
$L_{0}$ & $3\%$& $0\%$ & $0\%$ \\
\hline
$L_{1}$ & $9\%$& $6.1\%$ & $10\%$ \\
\hline
$L_{2}$	& $13.2\%$ & $12.2\%$ & $9\%$ \\
\hline
\end{tabular}
\end{table}
\begin{table}[h]
\centering
\caption{The Other Parameters}
\label{Tab_Three_Other_Parameters}
\begin{tabular}{|c|c|}
\hline
$\theta_{1}=1.2\% $ &  $r_{d}=1\%$ \\
\hline
$k_{lev}=4\%$ & $\theta_{2}=1.0\% $ \\
\hline
\end{tabular}
\end{table}
\begin{enumerate}
\item{\textit{Example 1:}} For Example 1, the application of Model \ref{Model3.1} results in the initial investment strategy of \\
$(0\%,59.04\%,40.96\%)$, with the Leverage Ratio being $4\%$. For the intermediate time steps, we discuss the results node-wise.
\begin{enumerate}[(A)]
\item{\textit{Node-3:}} Here we can see that the bank has very limited access to capital via new debt and equity. It can issue a maximum of $2\%$ equity and $1\%$ debt, of the initial total wealth. The Leverage Ratio ($\displaystyle{\frac{Z_{1}-LTD}{Z_{1}}}$) at this point becomes negative. So, the bank will have to raise capital to meet the capital requirement. Mathematically, the bank has to collect $v_{e}$ amount of capital, so that the Leverage Ratio is at least $4\%$, that is,
\[\frac{Z_{1}-\beta_{LT}(1-e)(1+r_{d})}{Z_{1}+v_{e}} \geq 0.04.\]
Substituting all these values, we obtain that $v_{e} \geq 0.0791$. However, the bank is not in a position to issue this much equity, and hence it fails to survive. Therefore bank has to liquidate it's position and goes into bankruptcy.
\item{\textit{Node-2:}} In this scenario, the more risky loan $L_{2}$ has defaulted, but the less risky loan is in good condition. Further, the Leverage Ratio constraint and the ``equity holder's constraint'' are both satisfied without issuing new debt or equity. This leads to a nonempty feasible region. Solving the model, we get the result that, the bank's portfolio is $(35.33\%, 00.05\%)$ (the last one has defaulted) with $v_{d}=0.01$ and $v_{e}=0.012$. Here, due to low raising cost of the equity, the bank raises debt first, and then it raises equity. Theorem \ref{Theorem:3.12} creates an upper bound for the amount of equity that can be raised. Substituting the values, we get the upper bound is $0.012$, which is exactly same as $v_{e}$.
\item{\textit{Node-1:}} In this case all the loans have survived. Therefore bank has higher amount of realization. ``Equity holder's constraint'' protects the bank from failure at the final time point. Therefore, it has a large investment in the safe asset to ensure the return of the equity holders. The portfolio at time $t=1$ is given by $(44.81\%,00.07\%, 20.00\%)$ in the three loans. As a result, the bank raises the available debt and equity ($v_{d}=0.01$ and $v_{e}=0.02$).
\end{enumerate}
\item{\textit{Example 2:}} For Example 2, the application of Model \ref{Model3.1} results in the initial investment strategy of \\ 
$(0\%,54.01\%,45.99\%)$ with the Leverage Ratio being $4\%$. 
\begin{enumerate}[(A)]
\item{\textit{Node-3:}} In this node, the bank has lower short term debt, and has more access to capital via new debt and equity, for which the model has a feasible solution. Hence bank survives this stage, which means that in an unfavorable condition, the bank can raise capital, so that it can survive the unfavorable condition. We solved this model by the differential evolution method and determined that the bank invests $85.39\%$ in the safest asset, and issues $29.94\%$ in equity and $19.94\%$ in debts, that is, it almost raises the amount of capital that it can raise.  From Corollary \ref{Corollary:2}, we see that issuing new equity increases return on equity. And to satisfy capital requirement, the bank has to increase it's Leverage Ratio. From Theorem \ref{Theorem:3.10}, it is evident that bank has to raise equity to increase capital requirement. So bank has to issue new equity. The solution shows that the bank issued that total amount of equity which it can raise, and from Theorem \ref{Theorem:3.11} the upper bound of the debt (depending on this equity) is determined to be $1.64$, which is much higher than the amount that is available. Hence bank is able to issue the full amount of accessible debt. 
\item{\textit{Node-2:}} It also has the same dynamics like the previous problem. In this case, the bank invests $69.87\%$ in the safe asset and $0.04\%$ asset in the less risky loans, with $v_{d}=0.2$, that is, it issues the net amount of debt it can issue. Finally the amount of equity raised is given by, $v_{e}=0.0077$ which is also same as the upper bound given by Theorem \ref{Theorem:3.12}.
\item{\textit{Node-1:}} Here the portfolio is given by $(99.90\%, 5.09\%, 0\%)$. We can see that bank is able to make investments in riskier assets after ensuring the ``equity holder's constraint''. In this case also, the banks issue all the available equity and debt available. Therefore $v_{e}=0.3$ and $v_{d}=0.2$. 
\end{enumerate}
\end{enumerate}
To demonstrate the bank's behavior, we have considered another example, only for \textit{Node-1}. In this example we have changed only $D$ to be $0.9$, with all other parameters being the same as Example 2. It shows that the bank invests its money in safe assets to guarantee the equity holder's expectation of return. Investment in risky assets reduces the return on equity. Therefore to show this phenomenon we have considered this case with $D=0.9$, that is, a bank can issue a large amount of debt. In that case, the result shows that $99.98\%$ is invested in the safe asset (percentage is given with respect to the initial wealth),and $42.61\%$ in the less riskier asset. Compared with the case ($D=0.2$) we can say that banks invest first in safe assets, in order to ensure the constraints and then only it goes for risky assets. Here the ``equity holder's constraint'' shields the bank from default. When a bank satisfies this constraint it cannot fail under any scenario and so there is a large amount of investments in the safe asset. Results show that the return on equity is $2.5674$ for the Example 1 and $1.1209$ for Example 2, while for the last problem it becomes $1.0801$. Therefore, in favorable conditions, the issuance of more equity leads to reduction in its expected return.

\section{Conclusions}
\label{Sec_Three_Conclusions}

The theoretical results of this paper suggest some crucial properties of the bank dynamic problem in the intermediate time point. We have shown that the equity holders' constraint protects banks from defaulting. In favorable scenarios, issuing new equity decreases the return on equity, but in a worst-case scenario, it increases the return on equity. Hence, the bank needs new capital to satisfy the equity holders constraint and the Leverage Ratio constraint. So, if the bank has scope to raise enough capital in a bad condition, it will raise the capital in order to survive. On the other hand, issuing new debt increases return on equity (for lower issuance cost), whereas equity return decreases with issuing new debt (when issuing cost is more considerable). The Leverage Ratio always caps the amount of debt issued. The ``equity holder's constraint'' ensures the return crosses the expected return, if the previous cycle return is higher than the expected return. It desists banks from risk-taking and ensures its survival. Therefore, it reshuffles assets to meet all the constraint requirements.

\section*{Data Availability Statement}

The manuscript has no associated data.

\section*{Conflict of Interest Statement}

The authors declare no conflict of interest.

\bibliographystyle{elsarticle-num}

\bibliography{BIBLIO_3}

\end{document}